\documentclass{llncs}
\usepackage[T1]{fontenc}
\usepackage{cite}
\usepackage{amsmath,amssymb,amsfonts}
\usepackage{latexsym}
\usepackage[ruled,vlined]{algorithm2e}
\def\qed{\hfill$\Box$\par}
\SetCommentSty{small}
\SetAlgoVlined
\DontPrintSemicolon
\SetKw{KwLVar}{Local variable}
\SetKw{KwPVar}{Private local variable}
\SetKw{KwConst}{Constant}
\SetKw{KwMac}{Macro}
\SetKw{KwRules}{Rules}
\SetKwProg{Fn}{def}{\string:}{}
\SetKwProg{FnCont}{}{}{}
\SetKwIF{If}{ElseIf}{Else}{if}{:}{elif}{else:}{}%
\SetKwIF{Gc}{ElseGc}{ElseGcEnd}{}{~$\longrightarrow$}{$\Box$}{$\Box$}{}%
\SetKwFor{While}{while}{:}{fintq}%
\SetKwFor{For}{for}{\string:}{}%
\SetKwFor{Phase}{Phase}{:}{}%
\SetStartEndCondition{ }{}{}%
\SetNlSkip{0.5em}
\SetNlSty{}{}{:}
\SetKwFor{ForEach}{foreach}{\string:}{}%
\SetKwFunction{Range}{range}
\SetKw{Break}{break}

\def\XNULL{\bot}
\def\XBCAST{\textbf{bcast}}
\def\XRECV{\textbf{receive}}
\def\XNAPIERNUM{\mathrm{e}}
\begin{document}
\title{
  The R(1)W(1) Communication Model for Self-Stabilizing Distributed Algorithms
}
\author{ 
    Hirotsugu Kakugawa\inst{1}  \and
          Sayaka Kamei\inst{2}  \and \\
      Masahiro Shibata\inst{3}  \and
      Fukuhito Ooshita\inst{4}
}
\institute{
    Ryukoku University, Otsu, Shiga, Japan \\
    \email{kakugawa@rins.ryukoku.ac.jp}   \and
    Hiroshima University, Higashi Hiroshima, Hiroshima, Japan \\
    \email{s10kamei@hiroshima-u.ac.jp}    \and
    Kyushu Institute of Technology, Iizuka, Fukuoka, Japan \\
    \email{shibata@csn.kyutech.ac.jp}     \and
    Fukui University of Technology, Fukui, Fukui, Japan\\
    \email{f-oosita@fukui-ut.ac.jp}
}
\maketitle

\begin{abstract}
Self-stabilization
is a versatile methodology 
in the design of fault-tolerant distributed algorithms 
for transient faults.
A self-stabilizing system automatically recovers from 
any kind and any finite number of transient faults.
This property is specifically useful in
modern distributed systems with a large number of components.
In this paper, 
we propose a new communication and execution model named the R(1)W(1) model
in which each process can read and write its own and neighbors'
local variables in a single step. 
We propose
self-stabilizing distributed algorithms in the R(1)W(1) model 
for the problems of
maximal matching,
minimal $k$-dominating set and
maximal $k$-dependent set.
Finally, we propose {an example} transformer, 
based on randomized {distance-two} local mutual exclusion,
to simulate algorithms designed for the R(1)W(1) model
in the synchronous message passing model
with synchronized clocks.
\begin{keywords}
distributed algorithm, 
self-stabilization,
the R(1)W(1) model,
transformer
\end{keywords}
\end{abstract}


\section{Introduction}
\label{SEC:INTRO}

Self-stabilization \cite{Dijkstra1974,dolev00,Altisen2019}
is a versatile methodology for designing 
fault-tolerant distributed algorithms for transient faults.
A transient fault is defined as a corruption of data such as
message corruption, message loss, memory corruption and reboot, 
for example.
A self-stabilizing system automatically recovers from 
any kind and any finite number of transient faults.
It is regarded as a self-organizing system
because a globally synchronized initialization and reset are not necessary and 
the system automatically converges to some legitimate configuration
after the faults. 
This property is specifically useful in
modern distributed systems with a large number of components
such as 
the Internet, 
wireless sensor network, 
ad-hoc network and so on. 
However, 
arbitrary initial configurations and asynchronous executions 
make the design and verification of self-stabilizing distributed algorithms
quite difficult.
In this paper, 
we propose a new communication and execution model named the R(1)W(1) model
which makes the design and verification easier.
Then we propose a {simple} randomized transformer {as an example}
for algorithms designed in the R(1)W(1) model 
under the {unfair} central daemon
to run in the synchronous message passing model with synchronized clocks.

\subsection{Background}

Many self-stabilizing distributed algorithms
adopt a communication model called 
the \emph{state-reading model}
(or the \emph{locally shared memory model}).
This model is introduced in 
the first paper on self-stabilization \cite{Dijkstra1974}, 
and it is widely accepted 
in the research community.
In the state-reading model, 
each process has some \emph{local variables}, and
each process can read local variables of its neighbors
without any delay.
Processes communicate with each other 
by writing values to local variables and 
reading neighbors' local variables.
Furthermore,
many self-stabilizing distributed algorithms
adopt the \emph{composite atomicity model}
(or the \emph{atomic-state model}) 
for modeling executions of processes \cite{Dijkstra1974}. 
In a single \emph{move}, 
each process performs the following three substeps atomically:
(1)~reads its own and neighbors' local variables,
(2)~performs computation based on these values, and
(3)~writes the results on its own local variables.
Asynchronous process execution is modeled by \emph{daemon} \cite{Dijkstra1974}. 
The \emph{central daemon} is a process scheduler that 
selects one process at each step, while
the \emph{distributed daemon} 
selects any non-empty set of processes at each step.
Asynchronous and adversarial process scheduling by daemon 
makes designing self-stabilizing distributed algorithms difficult.
To make algorithm design easier, 
the \emph{distance-two} model \cite{Gairing2004a} and
the \emph{expression} model \cite{Turau2012}
are proposed.
These models enable each process, in a single step,  
to access the local variables of processes 
that are within two hops. 

The models mentioned above seem to be artificial, 
and the self-stabilizing distributed algorithms 
designed under these models do not run 
in real distributed computing environments.
The message passing model is 
closer to actual distributed computing environments, 
however, in general, 
design and verification is difficult in the model
compared to the state-reading model.
Transformation of models is an effective strategy 
for overcoming these difficulties.
An algorithm is designed under a model
such as the distance-two model, and
it is transformed into another model 
such as the message passing model. 

\subsection{Related works}

In the (ordinary) state-reading model,
each process has access to local variables of direct neighbors. 
We call this model the \emph{distance-one} model.
The algorithm design is simplified
by increasing the communication distance of the model, i.e., 
each process is allowed to access to the local variables of processes 
within two or more hops in a single move.  
Existing schemes typically proceed through the following three steps:
(1)~develop a self-stabilizing distributed algorithm assuming 
  the distance-two model \cite{Gairing2004a}
  or 
  the expression state-reading model \cite{Turau2012},  
(2)~transform it to the distance-one state-reading model
    \cite{Gairing2004a,
          Turau2012}, and
(3)~use another transformer 
    \cite{HWT94,
          mizuno96b,
          Herman03}
   to run in the message passing model.

Gairing et al.\ \cite{Gairing2004a}
propose the \emph{distance-two} model for communication.
Each process has access to local variables of processes within two hops 
in a single move.
They also present two transformers that transform 
a self-stabilizing distributed algorithm
in the distance-two model under the central daemon
to the distance-one model under the central and distributed daemons.
The overhead factor of the transformer 
to central (resp., distributed) daemon 
is $m$ (resp., $O(n^{2} m)$),
i.e., 
the time complexity of the transformed algorithm is 
$O(m T)$ (resp. $O(n^{2} m T)$), 
where 
$m$ is the number of edges in the network,  
$n$ is the number of processes, and
$T$ is the time complexity of $A$.

Goddard et al.\ \cite{Goddard2008b}
propose the \emph{distance-$k$ model} for communication
such that 
each process has access to local variables of processes in $k$ hops away,
where $k$ is arbitrary constant.
They also present a transformer which transforms
a self-stabilizing distributed algorithm $A$ 
in the distance-$k$ model under the central daemon
to an algorithm 
in the distance-one model under the central daemon.
The overhead factor of the transformer is 
$O(n^{\log k})$. 

Turau 
\cite{Turau2012}
proposes the \emph{expression model} for communication,
which is a generalization of the distance-two model.
Each process $P_{i}$ has some \emph{expressions}
whose values are determined by local variables of $P_{i}$ and its neighbors, 
and a process has an access to the values of expressions at neighbors.
An expression is considered as an aggregation of local variables of neighbors.
By reading the value of an expression of neighbors,
each process has an access to local variables of processes in two hops.
He proposes two transformers that transform 
a self-stabilizing distributed algorithm 
in the expression model under the central daemon
to the distance-one model in the central and distributed daemons.
The overhead factors of the two transformers are both $O(m)$.

To execute a self-stabilizing distributed algorithm
assuming the distance-one state-reading model 
in a message passing distributed system, 
several methods are proposed
\cite{HWT94,
      mizuno96b,
      Herman03}.
A basic idea which is common to these works is that 
each process has a \emph{cache} of local variables of neighbors, and 
each process reads the cache 
instead of reading local variables located on neighbors.

Another related work for communication model transformation
is the work by Cohen et al.\ \cite{Cohen2023}.
They propose transformers 
from the (distance-one) state-reading model
to the link-register model with read/write atomicity.
A link register is an abstraction of a unidirectional communication channel.
A sender processes writes a value to a link-register and 
a receiver process reads the register. 
Their transformers are based on local mutual exclusion. 


\subsection{Contribution of this paper}


In this paper, 
we propose a new computation model named the \emph{R(1)W(1) model} 
in which 
each process can read and write local variables of direct neighbors
in a single move.
Self-stabilizing algorithms under this model
assume the central daemon only for process scheduler
to avoid simultaneous writes to a local variable
by more than two or more processes. 

To demonstrate the R(1)W(1) model, 
we propose self-stabilizing distributed algorithms 
for the problems of
maximal matching and
minimal $k$-dominating set 
under the unfair central daemon.
The benefit of the proposed model is that 
it makes coordinated actions by neighboring processes simple 
by allowing processes to write neighbors' local variables.

We also propose {an example} transformer 
for silent self-stabilizing distributed algorithms 
in the R(1)W(1) model assuming the unfair central daemon
to the synchronous message passing model with synchronized clocks.
Here, we say that 
an algorithm is \emph{silent} 
if no process never takes any action when the system is stabilized, and
a daemon is \emph{unfair} 
if it takes an arbitrary (adversarial) process scheduling.
The existing transformers for 
the distance-two, distance-$k$ and expression models
generate an algorithm in the distance-one model, and
it needs another conversions to run in the message passing model.
On the other hand,
our transformer immediately generates an algorithm 
in the message passing model.
For simulating the central daemon in the message passing model,
we take an approach by local mutual exclusion based on randomized voting.
Specifically, 
our transformer is based on the \emph{distance-two local mutual exclusion}
to avoid simultaneous moves processes within distance two.
This guarantees that two or more processes never writes 
the same local variable of a process at the same time, and
the R(1)W(1) model is simulated.
We show that at least one process is allowed to take an action
with at least some constant probability.
As we show in Theorem~\ref{LEM:TR:PERFOMANCE}, 
the expected overhead factor of our transformation is 
$O(1)$ in time complexity
and
$O(n)$ in message complexity, 
where $n$ is the number of processes.
On the other hand,
the overhead factor of the transformer by Turau \cite{Turau2012}
is $O(m)$ in time complexity, 
where $m$ is the number of edges, and, unfortunately,
a transformed algorithm needs another model transformer
to run in the message passing model.

\subsection{Organization of this paper}

The rest of this paper is organized as follows.
In Section~\ref{SEC:PRELIMINARY}, 
we introduce the definitions and notation, 
specifically, we propose the R(1)W(1) model.
In Sections~\ref{SEC:SSMAXMAT},
\ref{SEC:SSKDOMSET} and
\ref{SEC:SSKDEPSET},
we propose self-stabilizing distributed algorithms 
in the R(1)W(1) model for problems of
maximal matching, 
minimal $k$-dominating set and 
maximal $k$-dependent set.
In Section~\ref{SEC:TRANSFORMER},
we propose a transformer for algorithm in the R(1)W(1) model
to the synchronous message passing model.
In Section~\ref{SEC:CONCLUSION}
we give concluding remarks.
%

\section{Preliminary}
\label{SEC:PRELIMINARY}

First, 
we define some notations used in this paper.
A distributed system is denoted by a graph $G=(V,E)$, where 
$V$ is the set of processes and
$E \subseteq V \times V$ 
is the set of bidirectional communication links between processes.
The number of processes is denoted by $n ~(= |V|)$.
Processes are denoted by $P_{0}, P_{1}, ..., P_{n-1}$.
The set of neighbor processes of $P_{i}$ is denoted by
$N_{i}$ ~($= \{ P_{j} \in V \mid (P_{i},P_{j}) \in E\}$).
The set of processes in two hops from $P_{i}$ is denoted by
$N^{(2)}_{i}$
($= \{ P_{j} \in V \mid 
       \textnormal{ the distance between } P_{i} \textnormal{ and } P_{j} 
       \textnormal{ is } 2 \}$).
The set of processes within two hops of $P_{i}$ is denoted by
$N^{(1,2)}_{i} ~(= N_{i} \cup N^{(2)}_{i})$.
Each process $P_{i}$ is given, as initial knowledge, 
the values of $N_{i}$, $N^{(2)}_{i}$ and $N^{(1,2)}_{i}$ as constants.

\subsection{The R(1)W(1) model}

In this paper, 
we propose a new computational model, called R(1)W(1), 
which is an extension of 
the ordinary state-reading model. 
In the ordinary state-reading model, 
a single move of each process $P_{i}$ consists of
(1)~reading local variables of $P_{i}$ and processes in $N_{i}$, 
(2)~computing locally, and
(3)~writing to local variables of $P_{i}$.
In the R(1)W(1) model, 
a single move of each process $P_{i}$ consists of
(1)~reading local variables of $P_{i}$ and processes in $N_{i}$, 
(2)~computing locally, and
(3)~writing to local variables of $P_{i}$ and processes in $N_{i}$.
So, a process can update local variables of neighbor processes in a single move.
In this model, 
we assume the central daemon for process scheduler
to avoid simultaneous writes to a local variable
by more than two or more neighbor processes. 
So, we do not assume the distributed daemon.

This model is further generalized to the R($d_{r}$)W($d_{w}$) model 
in which each process 
can read (resp., write) local variables of processes 
within $d_{r}$ (resp., $d_{w}$) hops.
According to our notation, 
the ordinary state-reading model is denoted by R(1)W(0), and
the distance-two model is denoted by R(2)W(0). 

\subsection{Self-stabilization}


Let $q_{i}$ be the local state of process $P_{i} \in V$.
A \emph{configuration} of a distributed system is 
a tuple $(q_{0}, ..., q_{i}, ..., q_{n-1})$
of local states of $P_{0}, ..., P_{i}, ..., P_{n-1}$.
By $\Gamma$, 
we denote the set of all configurations.

We adopt the set of \emph{guarded commands} 
(or, set of \emph{rules})
to describe self-stabilizing distributed algorithms in the R(1)W(1) model
as shown in 
Algorithms~\ref{ALG:MMat11}, 
for example. 
A \emph{guard} 
is a predicate (boolean function)
on local states of processes.
A \emph{command}
is a series of statements to update local variables of process(es). 
We say that a process is \emph{enabled} 
iff it has a guard which evaluates to true.
Otherwise, we say that a process is \emph{disabled}.

We assume that processes are serially scheduled, 
meaning that 
exactly one enabled process is selected and executes a guarded command.
Such a scheduler is called the \emph{central daemon}. 
We assume that the central daemon is \emph{unfair} 
in the sense that the process scheduling may be adversarial,
i.e., it may not select a specific process 
unless the process is the only enabled process.
An enabled process selected by the daemon executes 
a command corresponding to a guard that evaluates to true.
Let $\gamma$ be any configuration, and 
$\gamma'$ be the configuration which follows $\gamma$ in an execution.
Then, this relation is denoted by $\gamma \rightarrow \gamma'$. 
Execution of an algorithm is maximal, 
meaning that 
the execution continues as long as there exists an enabled process.

The correct system states of a distributed system are 
specified by a set of \emph{legitimate} configurations,
denoted by $\Lambda$ ($\subseteq \Gamma$). 

A distributed system is \emph{self-stabilizing} with respect to $\Lambda$
iff the following two conditions are satisfied.
\begin{enumerate}
\item \emph{Closure}:
      For any legitimate configuration $\gamma \in \Lambda$, 
      if there exists an enabled process in $\gamma$, then
      any configuration $\gamma'$ that follows $\gamma$ is also legitimate.
\item \emph{Convergence}:
      For any illegitimate configuration 
      $\gamma \in \Gamma \backslash \Lambda$, then
      configuration of the system becomes legitimate eventually.
\end{enumerate}

\section{Maximal matching in the R(1)W(1) model}
\label{SEC:SSMAXMAT}

In this section,
we propose a self-stabilizing distributed algorithm \textsf{MMat11}
for the maximal matching problem
assuming the R(1)W(1) model under the unfair central daemon.
A matching $F$ of a graph $G=(V,E)$
is a subset of edges $E$ such that,
for each edge $(P_{i},P_{j}) \in F$, 
$(P_{k}, P_{j}) \not\in F$ holds 
for each $P_{k} \in V \backslash\{P_{i}\}$.
A matching $F$ is maximal iff
$F \cup \{(P_{i},P_{j})\}$ is not a matching
for each edge $(P_{i},P_{j}) \in E \backslash F$. 

Self-stabilizing distributed algorithms for 
the maximal matching problem 
have been proposed.
To represent the time complexities of algorithms, 
we adopt the total number of moves (or steps)
which counts the total number of executions of guarded commands to converge.
Hedetniemi et al.\ 
proposed an algorithm 
with time complexity $O(m)$
under the unfair central daemon in \cite{Hedetniemi2001}.
Manne et al.\ 
proposed an algorithm 
with time complexity $O(m)$
under the unfair distributed daemon in \cite{Manne2009}.
On the other hand, 
the time complexity
of our algorithm \textsf{MMat11} is $O(n)$. 

\subsection{The proposed algorithm \textsf{MMat11}}

The proposed algorithm \textsf{MMat11} is presented
in Algorithm~\ref{ALG:MMat11}.
Each process $P_{i}$ maintains a single local variable $q_{i}$.
We say that $P_{j} \in N_{i}$ is a \emph{matching neighbor} of $P_{i}$
iff $P_{j} = q_{i}$ and $P_{i} = q_{j}$ hold.
If $P_{j}$ is a matching neighbor of $P_{i}$, 
we say that $P_{i}$ and $P_{j}$ are \emph{matching pair}.
We say that $P_{i}$ is \emph{free} 
iff $q_{i} = \bot$ holds.
We say that $P_{i}$ \emph{points} to $P_{j} \in N_{i}$
iff $q_{i} = P_{j}$.

There are five rules in \textsf{MMat11}.
\begin{itemize}
\item Rule~1: 
      If $P_{i}$ is free and it is pointed by $P_{j}$, 
      then
      $P_{i}$ accepts the proposal of $P_{j}$, and
      $P_{i}$ becomes a matching neighbor of $P_{j}$.
\item Rule~2:
      If $P_{i}$ is free and 
      there exists a free neighbor $P_{j}$,
      then 
      $P_{i}$ forces $P_{j}$ to become a matching neighbor of $P_{i}$.
\item Rule~3:
      If $P_{j}$ to which $P_{i}$ points is free, 
      then 
      $P_{i}$ forces $P_{j}$ to become a matching neighbor of $P_{i}$.
\item Rule~4:
      If $P_{j}$ to which $P_{i}$ points does not point to $P_{i}$
      but there exists 
      a neighbor $P_{k} \in N_{i}$ which is free or $q_{k}=P_{i}$ holds,
      then 
      $P_{i}$ becomes a matching neighbor of $P_{k}$.
      In the former case, $P_{i}$ forces $P_{k}$ to point to $P_{i}$.
\item Rule~5:
      If $P_{j}$ to which $P_{i}$ points does not point to $P_{i}$
      and 
      each neighbor $P_{k} \in N_{i}$
      is not free and does not point to $P_{i}$,
      then 
      $P_{i}$ gives up finding a matching neighbor.
\end{itemize}

By Rules~1, 2, 3 or 4, 
$P_{i}$ makes a matching pair with a neighbor, and
the matching pair is maintained forever.

\begin{algorithm}
\caption{Self-stabilizing distributed maximal matching algorithm \textsf{MMat11}}
\label{ALG:MMat11}
\KwLVar\\
\Indp
$q_{i} \in N_{i} \cup \{\bot\}$ 
  \tcp*{the matching neighbor of $P_{i}$}
\Indm
\BlankLine
\Fn(\tcp*[h]{Free $P_{i}$ accepts $P_{j}$, and make a matching})%
   {\normalshape Rule~1}{
  \If{%
     $q_{i} = \bot 
        \land 
      \exists P_{j} \in N_{i} : q_{j} = P_{i}$
  }{%
    $q_{i} := P_{j}$
  }
}
\Fn(\tcp*[h]{Force $P_{i}$ and $P_{j}$ to make a matching})%
   {\normalshape Rule~2}{
  \If{%
    $q_{i} = \bot 
       \land 
     \exists P_{j} \in N_{i} : q_{j} = \bot$
  }{%
    $q_{i} := P_{j}; q_{j} := P_{i}$
  }
}
\Fn(\tcp*[h]{Force $P_{j}$ to make a matching with $P_{i}$})%
   {\normalshape Rule~3}{
  \If{%
    $q_{i} \in N_{i} 
     \land 
     q_{j} = \bot, \textnormal{ where } P_{j} = q_{i}$ 
  }{%
    $q_{j} := P_{i}$
  }
}
\Fn(\tcp*[h]{$P_{i}$ switches to $P_{k}$ to make a matching})%
   {\normalshape Rule~4}{
  \If{%
    $q_{i} \in N_{i} 
     \land 
     q_{j} \not\in \{P_{i},\bot\}, 
     \textnormal{ where } P_{j} = q_{i},
       \break
       {} \land 
     \exists P_{k} \in N_{i} : q_{k} \in \{P_{i},\bot\}$
  }{%
    $q_{i} := P_{k}$; $q_{k} := P_{i}$
  }
}
\Fn(\tcp*[h]{$P_{i}$ gives up})%
   {\normalshape Rule~5}{
  \If{%
    $q_{i} \in N_{i} 
       \land 
     q_{j} \not\in \{P_{i},\bot\}, 
     \textnormal{ where } P_{j} = q_{i},
     \break 
       {} \land
     \forall P_{k} \in N_{i} : q_{k} \not\in \{P_{i},\bot\}$
  }{%
    $q_{i} := \bot$
  }
}
\end{algorithm}

\subsection{The proof of correctness of \textsf{MMat11}}

By $\Gamma_{\textrm{MM}}$, 
we denote the set of all configurations of \textsf{MMat11}.
A configuration $\gamma$ of \textsf{MMat11} is legitimate 
iff the following two conditions are satisfied:
\begin{itemize}
\item Matching: 
      $\forall P_{i} \in V : 
         q_{i} \in N_{i}
           \Rightarrow
         q_{j} = P_{i}
      $, where $P_{j} = q_{i}$.
\item Maximality: 
      $\forall P_{i} \in V : 
         q_{i} = \bot 
           \Rightarrow
         (\forall P_{j} \in N_{i} : q_{j} \not\in \{P_{i},\bot\}) $.
\end{itemize}
By $\Lambda_{\textrm{MM}}$, 
we denote the set of legitimate configurations of \textsf{MMat11}.
Let 
  $F_{\textrm{MM}}(\gamma) 
   = \{ (P_{i},P_{j}) \in E \mid q_{i}=P_{j} \land q_{j}=P_{i}\}$
be the set of matching pairs.

\begin{lemma}
\label{LEM:MM:LEG_CONFIG-MM}
For each $\gamma \in \Lambda_{\textrm{MM}}$, 
$F_{\textrm{MM}}(\gamma)$ is a maximal matching of $G$.
\end{lemma}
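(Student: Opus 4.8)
The plan is to verify the two defining properties of a maximal matching (being a matching, and being maximal) directly from the two clauses, \emph{Matching} and \emph{Maximality}, that define $\Lambda_{\textrm{MM}}$.

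First I would check that $F_{\textrm{MM}}(\gamma)$ is a matching. This in fact needs no hypothesis beyond the fact that each $q_i$ holds a single value: if $(P_i,P_j) \in F_{\textrm{MM}}(\gamma)$ and $(P_k,P_j) \in F_{\textrm{MM}}(\gamma)$, then by definition of $F_{\textrm{MM}}$ we have $q_j = P_i$ and $q_j = P_k$, hence $P_i = P_k$. Thus no process occurs in two distinct pairs, which is exactly the matching condition.

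Next, for maximality I would take an arbitrary edge $(P_i,P_j) \in E \backslash F_{\textrm{MM}}(\gamma)$ and show that $F_{\textrm{MM}}(\gamma) \cup \{(P_i,P_j)\}$ is not a matching, i.e.\ that at least one of $P_i$, $P_j$ is already covered by $F_{\textrm{MM}}(\gamma)$. The key observation is that, under the \emph{Matching} clause, a process $P_i$ can fail to be covered only when $q_i = \bot$: if $q_i \in N_i$, then the \emph{Matching} clause forces $q_{q_i} = P_i$, so $(P_i, q_i) \in F_{\textrm{MM}}(\gamma)$ and $P_i$ is covered. So, toward a contradiction, suppose neither $P_i$ nor $P_j$ is covered; then $q_i = \bot$ and $q_j = \bot$. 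Applying the \emph{Maximality} clause to $P_i$ (with $q_i = \bot$) together with the neighbor $P_j \in N_i$ gives $q_j \not\in \{P_i,\bot\}$, contradicting $q_j = \bot$. Hence one endpoint is covered, so adding $(P_i,P_j)$ destroys the matching property, and $F_{\textrm{MM}}(\gamma)$ is maximal.

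I do not anticipate a genuine obstacle in this lemma; the only point requiring slight care is recognizing that ``$P_i$ not covered by $F_{\textrm{MM}}(\gamma)$'' has two a priori possibilities ($q_i = \bot$, or $q_i \in N_i$ but $q_{q_i} \neq P_i$) and that the \emph{Matching} clause rules out the second, leaving the \emph{Maximality} clause to close the argument.
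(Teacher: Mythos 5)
Your proof is correct and follows essentially the same route as the paper: verify the matching property from the single-valuedness of $q_j$ together with the \emph{Matching} clause, and derive maximality from the \emph{Maximality} clause applied to an uncovered endpoint. If anything, your treatment of maximality is slightly more complete than the paper's, since you explicitly note that ``uncovered'' forces $q_i = \bot$ (the other possibility being excluded by the \emph{Matching} clause) before invoking the \emph{Maximality} clause to reach the contradiction.
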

\begin{proof}
Let $\gamma$ be any configuration in $\Lambda_{\textrm{MM}}$.
First, we show that 
$F_{\textrm{MM}}(\gamma)$ is a matching of $G$.
For each $P_{i} \in V$, 
by the definition of legitimate configuration,
if $q_{i} \in N_{i}$ then $q_{j} = P_{i}$ holds, 
where $P_{j} = q_{i}$, 
i.e., 
$q_{i} \in N_{i}$ implies $(P_{i},P_{j}) \in F_{\textrm{MM}}(\gamma)$.
Because there exists no two distinct processes $P_{j}$ and $P_{k}$ 
such that $(P_{i},P_{j}), (P_{i},P_{k}) \in F_{\textrm{MM}}(\gamma)$, 
$F_{\textrm{MM}}(\gamma)$ is a matching.
Next, we show that 
a matching $F_{\textrm{MM}}(\gamma)$ of $G$ is maximal.
For each $P_{i} \in V$, 
by the definition of legitimate configurations,
if $q_{i} = \bot$ then $\forall P_{j} \in N_{i} : q_{j} \ne P_{i}$ holds, 
i.e., 
there exists no two processes $P_{i}$ and $P_{j}$ 
such that $F_{\textrm{MM}}(\gamma) \cup \{ (P_{i},P_{j}) \}$ 
is a matching of $G$.
Hence $F_{\textrm{MM}}(\gamma)$ is maximal.
\qed
\end{proof}

\begin{lemma}
\label{LEM:MM:CLOSURE}
(Closure)
Every process is disabled in $\gamma$
iff $\gamma \in \Lambda_{\textrm{MM}}$. 
\end{lemma}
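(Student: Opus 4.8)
The plan is to reduce the statement to a purely local characterization: for a \emph{single} process $P_{i}$, I will determine exactly when $P_{i}$ is disabled, and show that this local condition is precisely the conjunction of the Matching and Maximality clauses restricted to $P_{i}$. Both implications of the lemma then follow by quantifying over all processes. Since $q_{i} \in N_{i} \cup \{\bot\}$ by the variable declaration, the analysis splits into exactly two exhaustive cases.

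First I would handle the case $q_{i} = \bot$. Then the guards of Rules~3, 4 and 5 all fail vacuously because each requires $q_{i} \in N_{i}$. Hence $P_{i}$ is disabled iff the guards of Rules~1 and~2 both evaluate to false, i.e.\ iff no neighbour $P_{j}$ satisfies $q_{j} = P_{i}$ and no neighbour $P_{j}$ satisfies $q_{j} = \bot$; equivalently $\forall P_{j} \in N_{i} : q_{j} \notin \{P_{i}, \bot\}$, which is exactly the Maximality clause at $P_{i}$. Next, the case $q_{i} \in N_{i}$ with $P_{j} = q_{i}$: Rules~1 and~2 fail because $q_{i} \neq \bot$. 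The key observation is that the guards of Rules~4 and~5 are mutually exclusive and jointly cover the situation $q_{j} \notin \{P_{i},\bot\}$ (Rule~4 fires when some $P_{k} \in N_{i}$ has $q_{k} \in \{P_{i},\bot\}$, Rule~5 fires when no such $P_{k}$ exists). Therefore $P_{i}$ has no enabled rule among 3, 4, 5 iff Rule~3's guard fails, meaning $q_{j} \neq \bot$, \emph{and} $q_{j} \notin \{P_{i},\bot\}$ is false; together these give $q_{j} = P_{i}$, which is exactly the Matching clause at $P_{i}$.

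With this local equivalence in hand, the two directions are immediate. If $\gamma \in \Lambda_{\textrm{MM}}$, then for every $P_{i}$ either $q_{i} = \bot$ and Maximality gives the disabled condition, or $q_{i} \in N_{i}$ and Matching gives it; so every process is disabled. Conversely, if every process is disabled, then applying the characterization to each $P_{i}$ yields the Matching clause for every $P_{i}$ with $q_{i} \in N_{i}$ and the Maximality clause for every $P_{i}$ with $q_{i} = \bot$, hence $\gamma \in \Lambda_{\textrm{MM}}$.

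The only real obstacle is the bookkeeping around Rules~4 and~5: one must notice that their guards form a partition of the predicate $q_{j} \notin \{P_{i},\bot\}$, so that the absence of any enabled rule collapses to the single equality $q_{j} = P_{i}$ rather than to a more complicated condition. Aside from this, the proof is routine guard evaluation, and I would state once at the outset that we rely on $q_{i} \in N_{i} \cup \{\bot\}$ to ensure the two cases are exhaustive.
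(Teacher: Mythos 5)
Your proposal is correct and follows essentially the same route as the paper's proof: a case split on $q_{i}=\bot$ versus $q_{i}\in N_{i}$, with direct guard evaluation showing that Rules~1--2 being disabled is equivalent to the Maximality clause and Rules~3--5 being disabled is equivalent to the Matching clause. Packaging this as a single per-process biconditional (rather than arguing the two directions separately as the paper does) is only an organizational difference.
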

\begin{proof}
($\Rightarrow$)
Let $\gamma \in \Gamma_{\textrm{MM}}$ be any configuration
such that every process is disabled in $\gamma$, and
$P_{i} \in V$ be any process. 
In the case $q_{i} = \bot$ holds,
by \textsf{MMat11}, 
$\forall P_{j} \in N_{i} : q_{j} \ne P_{i} \land q_{j} \ne \bot$
holds, 
which is equivalent to the maximality condition of legitimate configurations.
In the case $q_{i} = P_{j} \in N_{i}$ holds,
by \textsf{MMat11}, 
$q_{j} \ne \bot \land q_{j} \in \{P_{i},\bot\}$
holds,
which is equivalent to the matching condition of legitimate configurations.
Hence $\gamma \in \Lambda_{\textrm{MM}}$ holds. 

($\Leftarrow$)
Let $\gamma \in \Lambda_{\textrm{MM}}$ be any legitimate configuration, and 
$P_{i} \in V$ be any process.
In the case $q_{i} = \bot$ in $\gamma$ holds, 
by the maximality condition of legitimate configurations, 
$\forall P_{j} \in N_{i} : q_{j} \not\in \{P_{i},\bot\}$ holds, and
$P_{i}$ is not enabled by Rules~1 and 2.
Obviously, $P_{i}$ is not enabled by Rules~3, 4 and 5 in this case.
In the case $q_{i} = P_{j} \in N_{i}$ in $\gamma$ holds,  
by the matching condition of legitimate configurations, 
$q_{j} = P_{i}$ holds, and
$P_{i}$ is not enabled by Rules~3, 4 and 5.
Obviously, $P_{i}$ is not enabled by Rules~1 and 2 in this case.
\qed
\end{proof}

Let 
$A(\gamma) = |F_{\textrm{MM}}(\gamma)|$ and
$B(\gamma) = |\{ P_{i} \in V \mid 
                 q_{i} \in N_{i} \land 
                 q_{j} \not\in\{P_{i},\bot\}, 
                 \textnormal{where } P_{j} = q_{i} \}|$.
Intuitively speaking, 
$A$ represents the number of matching pairs, and
$B$ represents the number of processes $P_{i}$ such that 
the value of $q_{i}$ is incorrect.
For any configuration $\gamma \in \Gamma_{\textrm{MM}}$, 
$0 \leq A(\gamma) \leq \lfloor n/2 \rfloor$ and
$0 \leq B(\gamma) \leq n$ hold.

\begin{lemma}
\label{LEM:MM:A,B:MONOTONIC}
For any $\gamma, \gamma' \in \Gamma_{\textrm{MM}}$ 
such that $\gamma \rightarrow \gamma'$, 
i.e., $\gamma$ is not legitimate,
$A(\gamma) \leq A(\gamma')$ and 
$B(\gamma) \geq B(\gamma')$ hold.
Furthermore, 
$A(\gamma) < A(\gamma')$ or
$B(\gamma) > B(\gamma')$ holds.
\end{lemma}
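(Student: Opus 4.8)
The plan is to perform a case analysis over the five rules of \textsf{MMat11}, tracking how each application changes the potential quantities $A$ and $B$. Since the daemon selects exactly one enabled process $P_i$ per step, and each rule either updates $q_i$ alone or updates $q_i$ together with one neighbor's variable $q_k$, the change in the configuration is local, and I expect $A$ and $B$ to change only by small amounts that can be bounded by inspecting the affected edges incident to $P_i$ and (in Rules~2 and~4) to $P_k$. The key observation to keep in mind throughout is that a matching pair, once formed, is never broken: if $q_i=P_j$ and $q_j=P_i$ both hold, then neither $P_i$ nor $P_j$ is enabled (by Lemma~\ref{LEM:MM:CLOSURE}'s forward direction, or directly from the guards), so no rule can be applied to a process currently in a matching pair. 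Hence $F_{\textrm{MM}}$ is only ever added to, never removed from, which gives $A(\gamma) \le A(\gamma')$ immediately.

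First I would handle the five rules one at a time. For Rule~1, $P_i$ is free and some $P_j\in N_i$ has $q_j=P_i$; setting $q_i:=P_j$ creates the matching pair $(P_i,P_j)$, so $A$ increases by exactly $1$; meanwhile $P_j$ was a process with $q_j\in N_j$ pointing at a non-reciprocating free vertex, i.e. $P_j$ was counted in $B$ (its pointee $P_i$ had $q_i=\bot\notin\{P_j,\bot\}$? — no, $\bot\in\{P_j,\bot\}$, so $P_j$ need not have been in $B$; in any case $A$ strictly increases). For Rule~2, both $P_i,P_j$ are free; after $q_i:=P_j;\,q_j:=P_i$ the pair $(P_i,P_j)$ is formed, so $A$ increases by $1$; neither was in $B$ before (both had $q=\bot$), and after the move both point at each other reciprocally, so neither enters $B$: thus $A$ strictly increases and $B$ is unchanged. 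For Rule~3, $P_i$ points at a free $P_j$; after $q_j:=P_i$ the pair $(P_i,P_j)$ is formed so $A$ increases by $1$; $P_i$ was pointing at a free vertex so $P_i\notin B$ before, and after the move $P_i$ is in a matching pair so $P_i\notin B$ after; $P_j$ was free so $P_j\notin B$ before and after. Again $A$ strictly increases. For Rule~4, $P_i$ points at a non-reciprocating non-free $P_j$ (so $P_i\in B$ before) and switches to $P_k$ with $q_k\in\{P_i,\bot\}$, setting $q_i:=P_k;\,q_k:=P_i$; after the move $(P_i,P_k)$ is a matching pair, so $A$ increases by $1$ and $P_i$ leaves $B$; I must also check $P_k$: if $q_k=P_i$ before, $P_k$ was already a correctly-pointing process not in $B$ and stays out; if $q_k=\bot$ before, $P_k\notin B$ before and, now in a matching pair, $P_k\notin B$ after. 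So $A$ strictly increases and $B$ does not increase (it drops by at least $1$ from $P_i$ leaving). For Rule~5, $P_i$ (which is in $B$) sets $q_i:=\bot$; no matching pair is created or destroyed so $A$ is unchanged, but $P_i$ leaves $B$, so $B$ strictly decreases.

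Assembling the cases: in Rules~1–4, $A(\gamma) < A(\gamma')$, and in every case $A(\gamma)\le A(\gamma')$ and $B(\gamma)\ge B(\gamma')$; in Rule~5, $B(\gamma) > B(\gamma')$ while $A$ is unchanged. Hence in all five cases $A(\gamma)\le A(\gamma')$, $B(\gamma)\ge B(\gamma')$, and at least one of the two inequalities is strict, which is exactly the claim. The main obstacle I anticipate is not any single rule but the bookkeeping for the \emph{second} affected process in Rules~2 and~4 — one must be careful that writing $q_k:=P_i$ on a neighbor does not inadvertently push some \emph{other} vertex into $B$ (e.g. a vertex that was pointing at $P_k$). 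Here I would use the fact that $P_k\notin B$ before the move only concerns $P_k$'s own variable, and the membership of any third vertex $P_\ell$ in $B$ depends on $q_\ell$ and $q_{q_\ell}$; the only $q$-values changed are $q_i$ and $q_k$, so the only third vertices whose $B$-status could change are those pointing at $P_i$ or $P_k$, and a short check (using that $P_i$ and $P_k$ end up in a matching pair, so any $P_\ell$ pointing at one of them now has a non-reciprocating target that is non-free — but such $P_\ell$ pointing at $P_i$ or $P_k$ would have been in $B$ already, or remains accounted for) shows $B$ does not increase. This local-frame argument is the delicate part; everything else is a direct reading of the guards.
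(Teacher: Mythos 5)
Your case analysis follows the same route as the paper's (rule-by-rule tracking of $A$ and $B$), but the step you yourself flagged as delicate --- that writing to a neighbor cannot push a \emph{third} vertex into $B$ --- is exactly where the argument fails, and your proposed resolution of it is wrong. Consider a vertex $P_{\ell}$ with $q_{\ell}=P_{m}$ where $P_{m}$ is free. Then $q_{q_{\ell}}=\bot\in\{P_{\ell},\bot\}$, so $P_{\ell}\notin B$ before the move; if $P_{m}$ now becomes matched to some $P_{m'}\ne P_{\ell}$ --- which happens to $P_{m}=P_{i}$ under Rule~1, to $P_{m}\in\{P_{i},P_{j}\}$ under Rule~2, to $P_{m}=P_{j}$ under Rule~3, and to $P_{m}=P_{k}$ under Rule~4 when $q_{k}=\bot$ --- then afterwards $q_{q_{\ell}}=P_{m'}\notin\{P_{\ell},\bot\}$, so $P_{\ell}$ \emph{enters} $B$. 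Hence your claims that ``$B$ is unchanged'' for Rule~2, that $B$ ``does not increase'' for Rule~4, and that such a $P_{\ell}$ ``would have been in $B$ already'' are all false. Concretely, on the path $P_{1}\!-\!P_{2}\!-\!P_{3}$ with $q_{1}=P_{2}$, $q_{2}=\bot$, $q_{3}=P_{2}$, a Rule~1 move by $P_{2}$ choosing $P_{1}$ takes $(A,B)$ from $(0,0)$ to $(1,1)$, contradicting $B(\gamma)\ge B(\gamma')$. (A second, harmless slip: in Rule~4 with $q_{k}=P_{i}$, the process $P_{k}$ \emph{was} in $B$ before the move, since its pointee $P_{i}$ had $q_{i}=P_{j}\notin\{P_{k},\bot\}$.) You should be aware that the paper's own one-line proof shares this hole --- it only asserts that Rules~1--3 do not \emph{decrease} $B$, never that they do not increase it --- so the lemma as stated is not established by either argument, and the counterexample above shows it cannot be.

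What does survive is the part of your analysis that avoids third parties: $A$ increases by exactly one under each of Rules~1--4 and is unchanged under Rule~5 (matched processes are never enabled and never written to, so pairs are never destroyed), and Rule~5 decreases $B$ by exactly one (its guard forbids any neighbor from pointing at $P_{i}$, so no third vertex changes status). To recover convergence and the $O(n)$ bound one cannot treat $A$ and $B$ as jointly monotone; instead note that the matched state is absorbing and the only transition out of the free state is into the matched state, so each process executes Rule~5 at most once and one of Rules~1--4 at most once, giving at most $2n$ moves in total. Either reformulate the lemma along these lines or replace $B$ by a quantity that genuinely never increases.
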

\begin{proof}
A move by Rules~1,2,3 or 4 
increases the value of $A$ by one, however,
a move by Rule~5 does not.
A move by Rules~4 or 5
decreases the value of $B$ by one, however,
a move by Rules~1, 2 or 3 does not.
For any move, 
$A(\gamma) = A(\gamma')$ and
$B(\gamma) = B(\gamma')$ do not occur at the same time.
\qed
\end{proof}

\begin{lemma}
\label{LEM:MM:CONVERGENCE}
(Convergence)
Starting from arbitrary configuration in $\Gamma_{\textrm{MM}}$, 
any execution of \textsf{MMat11} reaches 
a legitimate configuration $\gamma \in \Lambda_{\textrm{MM}}$.
\end{lemma}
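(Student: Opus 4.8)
The plan is to combine the potential-function machinery from Lemma~\ref{LEM:MM:A,B:MONOTONIC} with the closure characterization from Lemma~\ref{LEM:MM:CLOSURE} to show that every execution terminates, and that termination coincides with legitimacy. First I would argue termination by a descent argument: consider the pair $(A(\gamma), B(\gamma))$ and, since we want $A$ to increase while $B$ decreases, work with the quantity $\Phi(\gamma) = 2\,(\lfloor n/2\rfloor - A(\gamma)) + B(\gamma)$, which is a non-negative integer bounded above by $n + n = 2n$ (using the stated bounds $0 \le A(\gamma) \le \lfloor n/2\rfloor$ and $0 \le B(\gamma) \le n$). By Lemma~\ref{LEM:MM:A,B:MONOTONIC}, for every step $\gamma \rightarrow \gamma'$ of \textsf{MMat11} we have $A(\gamma) \le A(\gamma')$ and $B(\gamma) \ge B(\gamma')$, and at least one of these inequalities is strict; hence $\Phi(\gamma') \le \Phi(\gamma)$, and in fact $\Phi$ strictly decreases at every step (a strict increase in $A$ drops the first term by at least $2$, a strict decrease in $B$ drops the second term by at least $1$, and neither term ever increases). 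Therefore no execution can perform more than $\Phi(\gamma_0) \le 2n$ moves before reaching a configuration in which no process is enabled.

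Next I would invoke Lemma~\ref{LEM:MM:CLOSURE}: a configuration in which every process is disabled is exactly a configuration in $\Lambda_{\textrm{MM}}$. Since executions are maximal (they continue as long as some process is enabled) and we have just shown every execution is finite, every execution must end in a configuration where no process is enabled, hence in $\Lambda_{\textrm{MM}}$. This establishes convergence from an arbitrary initial configuration in $\Gamma_{\textrm{MM}}$.

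I do not expect a serious obstacle here, since all the heavy lifting has been done in the preceding lemmas; the one point requiring a little care is making the bound on $A$ interact correctly with the potential. If $n$ is odd, $\lfloor n/2\rfloor$ may not be attained, but that only makes $\Phi$ larger, not negative, so the descent argument is unaffected — $\Phi$ is still a non-negative integer that strictly decreases, which is all that is needed. An alternative, if one prefers to avoid the $\lfloor n/2\rfloor$ term entirely, is to order pairs $(A,B)$ lexicographically with $A$ decreasing-to-be-maximized and $B$ minimized, and observe directly from Lemma~\ref{LEM:MM:A,B:MONOTONIC} that each step strictly improves this order within a finite partial order; both phrasings give the same conclusion. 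As a bonus, the $\Phi \le 2n$ bound also yields the claimed $O(n)$ move complexity of \textsf{MMat11}.
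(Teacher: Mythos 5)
Your proof is correct and follows essentially the same route as the paper's: use Lemma~\ref{LEM:MM:A,B:MONOTONIC} to show every execution is finite (the paper argues directly from the boundedness and monotonicity of $A$ and $B$, while you package the same fact into the explicit potential $\Phi = 2(\lfloor n/2\rfloor - A) + B$), then apply the characterization of Lemma~\ref{LEM:MM:CLOSURE} to conclude that the terminal configuration is legitimate. Your explicit bound $\Phi \le 2n$ is a harmless refinement that also recovers the $O(n)$ move complexity stated in Theorem~\ref{THR:MM:SS}.
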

\begin{proof}
By Lemma~\ref{LEM:MM:A,B:MONOTONIC}, 
any move changes the values of at least one of $A$ or $B$.
Because $A$ and $B$ are bounded,
there exists no infinite execution.
Hence any execution is finite and 
terminates in which no process is enabled.
By Lemmas~\ref{LEM:MM:LEG_CONFIG-MM} and \ref{LEM:MM:CLOSURE},
such a configuration is legitimate. 
\qed
\end{proof}

\begin{theorem}
\label{THR:MM:SS}
\textsf{MMat11} is self-stabilizing 
with respect to $\Lambda_{\textrm{MM}}$
under the unfair central daemon in the R(1)W(1) model, and
its time complexity is $O(n)$.
\end{theorem}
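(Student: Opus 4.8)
The plan is to assemble the theorem from the lemmas already proved. Self-stabilization with respect to $\Lambda_{\textrm{MM}}$ requires two things: closure and convergence. Closure follows essentially immediately from Lemma~\ref{LEM:MM:CLOSURE}: if $\gamma \in \Lambda_{\textrm{MM}}$ then every process is disabled in $\gamma$, so there is no enabled process and the closure condition (``if there exists an enabled process in $\gamma$, then $\gamma'$ is legitimate'') holds vacuously. (One should double-check that the definition of self-stabilization in the preliminaries is met this way — it is, since the premise of the closure clause is false.) Convergence is exactly the content of Lemma~\ref{LEM:MM:CONVERGENCE}. So the only real work remaining is the time-complexity bound of $O(n)$.

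For the complexity bound, I would argue via the potential-style quantities $A(\gamma)$ and $B(\gamma)$ from Lemma~\ref{LEM:MM:A,B:MONOTONIC}. The bounds $0 \le A(\gamma) \le \lfloor n/2 \rfloor$ and $0 \le B(\gamma) \le n$ are stated in the excerpt. By Lemma~\ref{LEM:MM:A,B:MONOTONIC}, along any execution $A$ is nondecreasing and $B$ is nonincreasing, and at every step at least one of them changes strictly (increase by one for $A$, decrease by one for $B$, according to which rule fired). Since $A$ can increase at most $\lfloor n/2 \rfloor$ times and $B$ can decrease at most $n$ times, and neither ever moves backward, the total number of moves is at most $\lfloor n/2 \rfloor + n = O(n)$. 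Hence the execution terminates within $O(n)$ moves, and by Lemmas~\ref{LEM:MM:LEG_CONFIG-MM} and \ref{LEM:MM:CLOSURE} the terminal configuration is legitimate.

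The main subtlety — and the step I would be most careful about — is the ``at most $\lfloor n/2 \rfloor + n$'' counting argument: I need that the quantity $( \lfloor n/2 \rfloor - A(\gamma)) + B(\gamma)$ is a genuine integer-valued potential that strictly decreases at every move. This follows because $A$ only ever increases (so $\lfloor n/2 \rfloor - A$ only decreases) and $B$ only ever decreases, and by the ``furthermore'' clause of Lemma~\ref{LEM:MM:A,B:MONOTONIC} at least one of the two strictly changes; thus the sum strictly decreases by at least one per move. Its initial value is at most $\lfloor n/2 \rfloor + n$ and it stays nonnegative, giving the move bound. I would also remark that this bound holds under the unfair central daemon because the argument is over arbitrary executions and makes no assumption about which enabled process is chosen. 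This completes the proof.
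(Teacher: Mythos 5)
Your proposal is correct and follows essentially the same route as the paper: closure and convergence are obtained from Lemmas~\ref{LEM:MM:CLOSURE} and \ref{LEM:MM:CONVERGENCE}, and the $O(n)$ bound comes from the monotone bounded quantities $A$ and $B$ of Lemma~\ref{LEM:MM:A,B:MONOTONIC}. Your explicit potential $(\lfloor n/2\rfloor - A(\gamma)) + B(\gamma)$ is just a slightly more careful packaging of the paper's counting argument.
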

\begin{proof}
By Lemmas~\ref{LEM:MM:CLOSURE} and \ref{LEM:MM:CONVERGENCE},
\textsf{MMat11} is self-stabilizing. 
Because 
$0 \leq A(\gamma) \leq \lfloor n/2 \rfloor, 
 0 \leq B(\gamma) \leq n$ 
hold for any initial configuration $\gamma$, and
any move changes the value of at least one of $A$ or $B$
by Lemma~\ref{LEM:MM:A,B:MONOTONIC}, 
the maximum number of moves is bounded by 
$\lfloor n/2 \rfloor + n  =  O(n)$.
\qed
\end{proof}

\section{Minimal $k$-dominating set in the R(1)W(1) model}
\label{SEC:SSKDOMSET}

In this section,
we propose a self-stabilizing distributed algorithm \textsf{MkDom11}
for the minimal $k$-dominating set problem
assuming the R(1)W(1) model under the central daemon.
For each integer $k \geq 1$, 
a $k$-dominating set $S$ of a graph $G=(V,E)$
is a subset of vertices $S \subseteq V$ such that, 
for each vertex $P_{i} \in V \backslash S$, 
$|N_{i} \cap S| \geq k$ holds.
A $k$-dominating set $S$ is minimal iff
any proper subset of $S$ is not a $k$-dominating set.
The definition is a generalization of 
the minimal dominating set (MDS), 
i.e., 
the definitions of
the minimal $1$-dominating set and the minimal dominating set 
are equivalent.

An $S \subseteq V$ is a minimal $k$-dominating set 
iff 
the following local conditions hold for each $P_{i} \in V$, 
and we design a distributed algorithm based on these local conditions. 
\begin{itemize}
\item Local $k$-Domination:
      $P_{i} \in V \backslash S \Rightarrow
           |\{ P_{j} \in N_{i} \cap S \}| \geq k$
\item Local Minimality: 
      $P_{i} \in S \Rightarrow
           \exists P_{j} \in N_{i} \cap (V \backslash S) : 
                   |\{ P_{k} \in N_{j} \cap S \}| \leq k$
\end{itemize}

Many self-stabilizing distributed algorithms for 
the dominating set problem are proposed. 
Below, algorithms not explicitly mentioned
assume the ordinary state-reading model.
Hedetniemi et al.\ \cite{Hedetniemi2003}
proposed an algorithm
for the special case $k=1$, which is equivalent to MDS. 
Kamei and Kakugawa \cite{Kamei2003}, 
proposed an algorithm in tree networks
in the general case of $k>1$.  
In the general case of $k>1$ and in general networks, 
Wang et al.\ \cite{Wang2012}
proposed an algorithm under the central daemon, and 
its time complexity is $O(n^{2})$.

Turau \cite{Turau2012} 
proposed an algorithm
in the general case of $k \geq 1$
in the expression model under the central daemon, and 
its time complexity is $O(n)$.
In this section, for the general case of $k \geq 1$,
we propose an algorithm in the R(1)W(1) model under the central daemon
whose time complexity is $O(n)$.

\subsection{The proposed algorithm \textsf{MkDom11}}

The proposed algorithm \textsf{MkDom11} is presented
in Algorithm~\ref{ALG:MkDom11}.
Each process $P_{i}$ maintains two local variables
$x_{i}$ and $c_{i}$.
$P_{i}$ is in a $k$-dominating set iff $x_{i} = 1$, and
$c_{i}$ counts the number of neighbors $P_{j}$ such that $x_{j} = 1$.
We define a macro $\textit{Count}_{i}()$ which represents 
the number of neighbors $P_{j}$ such that $x_{j}=1$.
We say that $c_{i}$ is \emph{correct}
iff $c_{i} = \textit{Count}_{i}()$ holds.

The value of $c_{i}$ is maintained to be equal to $\textit{Count}_{i}()$
so that neighbors of $P_{i}$ can read the value of $\textit{Count}_{i}()$.
In other words, 
$c_{i}$ gives an aggregated information of distance-two processes
to neighbors of $P_{i}$.
To maintain $c_{i}$ to be correct in the R(1)W(1) model,
each neighbor $P_{j}$ increments (resp., decrements) $c_{i}$ by one
when $P_{j}$ changes the value of $x_{j}$ 
from 0 to 1 (resp., 1 to 0). 
Then, 
once $c_{i}$ becomes correct, 
neighbors of $P_{i}$ maintains correctness of $c_{i}$ thereafter.

There are three rules in \textsf{MkDom11}.
\begin{itemize}
\item Rule~1:
      If $c_{i}$ is incorrect, $P_{i}$ fixes it.
\item Rule~2:
      This is a rule for local $k$-domination condition.
      If $x_{i}=0$ and
      the number of neighbors $P_{j}$ such that $x_{j} = 1$ is less than $k$, 
      $P_{i}$ changes $x_{i}$ from 0 to 1
      in order to satisfy the local $k$-domination condition.
      In addition, 
      $P_{i}$ increments $c_{j}$ by one for each neighbor $P_{j}$, 
      however,
      $P_{i}$ does not increment $c_{j}$ if $c_{j} \geq |N_{j}|$ holds
      because $c_{j}$ is obviously incorrect. 
      Here, we implicitly assume that 
      $P_{i}$ has access to the value of $|N_{j}|$, 
      which can be implemented by a local variable at $P_{j}$ 
      to hold the value. 
\item Rule~3:
      This is a rule for the local minimality condition.
      $P_{i}$ changes $x_{i}$ from 1 to 0
      if such a change does not violate 
      the local $k$-domination condition. 
      If $P_{i}$ changes $x_{i}$, 
      it decrements $c_{j}$ by one for each neighbor $P_{j}$, 
      however,
      $P_{i}$ does not decrement if $c_{j} = 0$ holds
      because $c_{j}$ is obviously incorrect. 
\end{itemize}

\begin{algorithm}
\caption{Self-stabilizing distributed minimal $k$-dominating set algorithm \textsf{MkDom11}}
\label{ALG:MkDom11}
\KwLVar\\
\Indp
  $x_{i} \in \{1, 0\}$ 
    \tcp*{whether a member of the set or not}
  $c_{i} \in \{0,1,..., |N_{i}|\}$ 
    \tcp*{\#neighbors s.t. $x_{j}=1$}
\Indm
\KwMac\\
\Indp
  $\textit{Count}_{i}() \equiv |\{ P_{j} \in N_{i} \mid x_{j} = 1 \}|$ \;
\Indm
\BlankLine
\Fn(\tcp*[h]{Fix the counter})%
   {\normalshape Rule~1}{
  \If{%
     $c_{i} \ne \textit{Count}_{i}()$
  }{%
    $c_{i} := \textit{Count}_{i}()$
  }
}
\Fn(\tcp*[h]{$k$-Domination})%
   {\normalshape Rule~2}{
  \If{%
    $x_{i} = 0 
       \land 
     c_{i} = \textit{Count}_{i}()
       \land 
     c_{i} < k$
  }{%
    $x_{i} := 1$ \;
    for each $P_{j} \in N_{i}$ s.t. $c_{j} < |N_{j}|$ : \;
    \quad  $c_{j} := c_{j} + 1$
  }
}
\Fn(\tcp*[h]{Minimality})%
   {\normalshape Rule~3}{
  \If{%
    $x_{i} = 1 
       \land 
     c_{i} = \textit{Count}_{i}()
       \land 
     c_{i} \geq k
       \land 
     (\forall P_{j} \in N_{i} : x_{j} = 1 \lor c_{j} > k)$
  }{%
    $x_{i} := 0$ \;
    for each $P_{j} \in N_{i}$ s.t. $c_{j} > 0$ : \;
    \quad  $c_{j} := c_{j} - 1$
  }
}
\end{algorithm}

\subsection{The proof of correctness of \textsf{MkDom11}}

By $\Gamma_{\textnormal{MkDom}}$, 
we denote the set of all configurations of \textsf{MkDom11}.
A configuration $\gamma$ of \textsf{MkDom11} is legitimate
iff the following three conditions are satisfied
for each $P_{i} \in V$.
\begin{itemize}
\item Correctness of the count: 
      $c_{i} = \textit{Count}_{i}()$
\item Local $k$-Domination: 
      $x_{i}=0 \Rightarrow
           c_{i} \geq k $
\item Local Minimality:
      $x_{i}=1 \Rightarrow
          c_{i} < k \lor
          \exists P_{j} \in N_{i} : x_{j} = 0 \land c_{j} \leq k$
\end{itemize}

By $\Lambda_{\textnormal{MkDom}}$, 
we denote the set of legitimate configurations of \textsf{MkDom11}.

\begin{lemma}
\label{LEM:MKDOM:CLOSURE}
A configuration $\gamma$ is legitimate 
iff 
no process is enabled. 
\end{lemma}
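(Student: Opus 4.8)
The plan is to prove both directions of the biconditional by contraposition-style case analysis, matching each of the three legitimacy conditions against the guards of Rules~1, 2 and 3. First I would establish the easy direction ($\Leftarrow$): assuming every process is disabled in $\gamma$, I show the three legitimacy conditions hold at every $P_i$. Since $P_i$ is disabled, the guard of Rule~1 is false, which immediately gives $c_i = \textit{Count}_i()$ — the correctness-of-count condition. Then, using $c_i = \textit{Count}_i()$, the guard of Rule~2 being false forces $x_i = 1 \lor c_i \geq k$, so if $x_i = 0$ then $c_i \geq k$, which is the local $k$-domination condition. Similarly, the guard of Rule~3 being false together with $c_i = \textit{Count}_i()$ forces $x_i = 0 \lor c_i < k \lor (\exists P_j \in N_i : x_j = 0 \land c_j \leq k)$; so if $x_i = 1$ then $c_i < k \lor \exists P_j \in N_i : x_j = 0 \land c_j \leq k$, which is the local minimality condition.

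For the other direction ($\Rightarrow$), I assume $\gamma \in \Lambda_{\textnormal{MkDom}}$ and show each $P_i$ is disabled, i.e.\ the guards of all three rules evaluate to false at $P_i$. The correctness-of-count condition $c_i = \textit{Count}_i()$ directly negates the Rule~1 guard. For Rule~2: its guard requires $x_i = 0 \land c_i < k$ (together with $c_i = \textit{Count}_i()$); but local $k$-domination says $x_i = 0 \Rightarrow c_i \geq k$, so the conjunction $x_i = 0 \land c_i < k$ is impossible, and the Rule~2 guard is false. For Rule~3: its guard requires $x_i = 1 \land c_i \geq k \land (\forall P_j \in N_i : x_j = 1 \lor c_j > k)$; but local minimality says $x_i = 1 \Rightarrow c_i < k \lor \exists P_j \in N_i : x_j = 0 \land c_j \leq k$, and this exactly negates $c_i \geq k \land (\forall P_j \in N_i : x_j = 1 \lor c_j > k)$, so the Rule~3 guard is false as well.

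The argument is essentially a mechanical unfolding of the guards, so there is no deep obstacle. The one place requiring a little care is making the logical negations line up precisely — in particular checking that $\lnot(\forall P_j \in N_i : x_j = 1 \lor c_j > k)$ is equivalent to $\exists P_j \in N_i : x_j = 0 \land c_j \leq k$, and that the appearance of $\leq k$ in the legitimacy condition (rather than $< k$ or $= k$) matches the $> k$ in the guard. Since the guards of Rules~2 and~3 also contain the clause $c_i = \textit{Count}_i()$, I would note once at the start that this clause is handled uniformly by the correctness-of-count condition, so it never needs to be re-examined in the sub-cases. I would write the proof as two short blocks, one per direction, each iterating over the three rules/conditions, and close with \qed.
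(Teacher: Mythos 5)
Your proposal is correct and follows essentially the same route as the paper's own proof: a direct case analysis matching each legitimacy condition against the guard of the corresponding rule, in both directions. Your version is in fact slightly more careful about the logical negations (e.g., verifying that $\lnot(\forall P_j \in N_i : x_j = 1 \lor c_j > k)$ is exactly $\exists P_j \in N_i : x_j = 0 \land c_j \leq k$), and correctly writes $c_j \leq k$ where the paper's printed proof has a typo ($c_j \geq k$).
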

\begin{proof}
($\Rightarrow$)
Because $c_{i}$ is correct, 
$P_{i}$ is not enabled by Rule~1. 
Because $c_{i}$ is correct and the $k$-domination condition
$x_{i}=0 \Rightarrow c_{i} \geq k$
holds, 
$P_{i}$ is not enabled by Rule~2. 
Because $c_{i}$ is correct and the minimality condition
$x_{i}=1 \Rightarrow 
  c_{i} < k \lor 
  \exists P_{j} \in N_{i} : x_{j} = 0 \land c_{j} \leq k$
holds, 
$P_{i}$ is not enabled by Rule~3.

($\Leftarrow$)
By Rule~1, 
$c_{i} = \textit{Count}_{i}()$ holds. 
By Rule~2, 
if $x_{i} = 0$ then 
$c_{i} \geq k$ holds.
Hence the $k$-domination condition holds.
By Rule~3, 
if $x_{i} = 1$ then 
$c_{i} < k$ or $\exists P_{j} \in N_{i} : x_{j} = 0 \land c_{j} \geq k$ hold.
Hence the minimality condition holds.
\qed
\end{proof}

\begin{lemma}
\label{LEM:MKDOM:C-EQ-COUNT}
For each process $P_{i} \in V$, 
if the condition 
$c_{i} = \textit{Count}_{i}()$ 
holds, 
it remains so thereafter.
\end{lemma}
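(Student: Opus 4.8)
The plan is to prove this as a one-step closure property and then lift it to "remains so thereafter" by induction along the execution. So let $\gamma \to \gamma'$ be an arbitrary move and assume $c_{i} = \textit{Count}_{i}()$ holds in $\gamma$; I want to show it still holds in $\gamma'$. The only quantities that appear in the invariant are $c_{i}$ itself and the variables $x_{j}$ for $P_{j} \in N_{i}$ (through $\textit{Count}_{i}()$), and since the central daemon lets exactly one process move, I would first clear the easy cases. If the moving process is neither $P_{i}$ nor a neighbor of $P_{i}$, nothing relevant changes. If $P_{i}$ itself moves by Rule~1, it sets $c_{i} := \textit{Count}_{i}()$, so the invariant holds trivially in $\gamma'$; if $P_{i}$ moves by Rule~2 or Rule~3, it alters only $x_{i}$ (which does not occur in $\textit{Count}_{i}()$ since $P_{i} \notin N_{i}$) and the counters of its neighbors, touching neither $c_{i}$ nor $\textit{Count}_{i}()$. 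Likewise, a neighbor moving by Rule~1 changes only its own counter and no $x$-variable, so it affects neither side.

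The substantive cases are a neighbor $P_{j} \in N_{i}$ moving by Rule~2 or Rule~3, since only then can both sides of the equation change. If $P_{j}$ executes Rule~2, its guard forces $x_{j} = 0$ in $\gamma$ and it flips $x_{j}$ to $1$, so $\textit{Count}_{i}()$ goes up by exactly one; it also increments $c_{i}$ by one \emph{provided} the guard $c_{i} < |N_{i}|$ holds. The point I would emphasize is that this guard is automatic here: with $x_{j} = 0$ in $\gamma$, $P_{j}$ is a neighbor of $P_{i}$ not counted in $\textit{Count}_{i}()$, hence $\textit{Count}_{i}() \le |N_{i}| - 1$, and the hypothesis $c_{i} = \textit{Count}_{i}()$ gives $c_{i} < |N_{i}|$. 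So $c_{i}$ and $\textit{Count}_{i}()$ both rise by one and the equality is restored in $\gamma'$. Symmetrically, if $P_{j}$ executes Rule~3 then $x_{j} = 1$ in $\gamma$ and it flips $x_{j}$ to $0$, dropping $\textit{Count}_{i}()$ by one; since $P_{j}$ was counted, $\textit{Count}_{i}() \ge 1$, so $c_{i} \ge 1$ and the guard $c_{i} > 0$ gating the decrement of $c_{i}$ is satisfied, and again both sides drop by one.

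The only mild obstacle — really the one thing worth stating carefully — is precisely this matching of the counter-update guards ($c_{j} < |N_{j}|$ in Rule~2 and $c_{j} > 0$ in Rule~3, read from $P_{i}$'s side as a neighbor of the mover) against the unit change in $\textit{Count}_{i}()$; once one knows $c_{i} = \textit{Count}_{i}()$ these guards never block the update, so $c_{i}$ moves in lockstep with the count. Everything else is bookkeeping. Collecting the cases yields $c_{i} = \textit{Count}_{i}()$ in $\gamma'$, and induction on the length of the execution from any configuration in which $c_{i} = \textit{Count}_{i}()$ completes the proof.
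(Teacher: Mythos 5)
Your proof is correct and follows the same route as the paper's one-line argument: a neighbor's move changes $x_{j}$ and $c_{i}$ in lockstep, so the equality is preserved, and the remaining cases (a move by $P_{i}$ itself or by a non-neighbor) touch neither side. Your explicit verification that the update guards $c_{i} < |N_{i}|$ (Rule~2) and $c_{i} > 0$ (Rule~3) can never block when $c_{i} = \textit{Count}_{i}()$ already holds is a detail the paper's proof silently assumes, and it is exactly the right justification for why the counter tracks the count without exception.
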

\begin{proof}
For each neighbor $P_{j} \in N_{i}$, 
when $P_{j}$ changes $x_{j}$ from 0 to 1 (resp. 1 to 0), 
$P_{j}$ increments (resp. decrements) $c_{i}$ by one.
Hence, if the condition $c_{i} = \textit{Count}_{i}()$ holds, 
it remains so thereafter.
\qed
\end{proof}

\begin{lemma}
\label{LEM:MKDOM:R1:ONCE}
For each process $P_{i} \in V$, 
the number of moves by Rule~1
is at most once, and
if $P_{i}$ moves by Rule~1, 
it is the first move of $P_{i}$.
\end{lemma}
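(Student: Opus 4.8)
The plan is to build everything on Lemma~\ref{LEM:MKDOM:C-EQ-COUNT}, which says that the predicate $c_{i} = \textit{Count}_{i}()$ is \emph{stable}: once it holds at $P_{i}$, it holds in every subsequent configuration (the lemma already accounts for the increments and decrements of $c_{i}$ performed by neighbors when they flip $x_{j}$). The first step is then immediate: right after $P_{i}$ executes Rule~1, the assignment $c_{i} := \textit{Count}_{i}()$ makes $c_{i} = \textit{Count}_{i}()$ true, so by Lemma~\ref{LEM:MKDOM:C-EQ-COUNT} it stays true forever, and the guard of Rule~1, namely $c_{i} \ne \textit{Count}_{i}()$, can never again evaluate to true at $P_{i}$. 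Hence $P_{i}$ moves by Rule~1 at most once.

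For the second claim, the key observation is that the guards of both Rule~2 and Rule~3 contain the conjunct $c_{i} = \textit{Count}_{i}()$. Therefore $P_{i}$ can move by Rule~2 or Rule~3 only from a configuration in which $c_{i} = \textit{Count}_{i}()$ already holds; applying Lemma~\ref{LEM:MKDOM:C-EQ-COUNT} again, the predicate then holds in all later configurations, so Rule~1 is disabled at $P_{i}$ from that point on. In other words, no Rule~1 move at $P_{i}$ can come after a Rule~2 or Rule~3 move at $P_{i}$. Combining this with the first part: if $P_{i}$ ever moves by Rule~1, that move is the unique Rule~1 move and it precedes every Rule~2/Rule~3 move of $P_{i}$, hence it is the first move of $P_{i}$.

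I expect no genuine obstacle here; the only point that needs a sentence of care is that neither $P_{i}$'s own Rule~2/Rule~3 move (which touches $x_{i}$ and the neighbors' counters, not $c_{i}$ nor $\textit{Count}_{i}()$) nor the interleaved moves of $P_{i}$'s neighbors can re-falsify $c_{i} = \textit{Count}_{i}()$ — but that is exactly what Lemma~\ref{LEM:MKDOM:C-EQ-COUNT} guarantees. So the whole argument reduces to invoking that lemma together with a direct inspection of the three guards.
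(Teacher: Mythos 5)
Your proof is correct and rests on exactly the same two ingredients as the paper's: the stability of $c_{i} = \textit{Count}_{i}()$ from Lemma~\ref{LEM:MKDOM:C-EQ-COUNT}, and the observation that this predicate appears as a conjunct in the guards of Rules~2 and~3. The paper merely packages the argument as a case split on which rule is $P_{i}$'s first move, whereas you prove the two claims directly; the substance is identical.
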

\begin{proof}
In case Rule~1 
is the rule of $P_{i}$'s first move,
the condition $c_{i} = \textit{Count}_{i}()$ becomes true
and it remains so thereafter
by Lemma~\ref{LEM:MKDOM:C-EQ-COUNT}.
Hence $P_{i}$ never moves by Rule~1 again.

In case Rule~2 or 3 is the rule of $P_{i}$'s first move,
the condition $c_{i} = \textit{Count}_{i}()$ holds
before $P_{i}$ moves by Rule~2 or 3.
By Lemma~\ref{LEM:MKDOM:C-EQ-COUNT},
the condition holds thereafter, and 
hence $P_{i}$ never moves by Rule~1.
\qed
\end{proof}

\begin{lemma}
\label{LEM:MKDOM:R3:ONCE}
For each process $P_{i} \in V$, 
the number of moves by Rule~3
is at most once.
\end{lemma}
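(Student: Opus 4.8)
The plan is to prove a stronger invariant about $x_{i}$: once $P_{i}$ executes Rule~3, the variable $x_{i}$ stays $0$ and $c_{i}$ stays $\geq k$ in every configuration from that point on. Since the guard of Rule~3 requires $x_{i}=1$, this immediately yields that $P_{i}$ can never again be enabled by Rule~3, hence moves by Rule~3 at most once.

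I would establish the invariant by induction on the moves executed after $P_{i}$'s first Rule~3 move. For the base case, the guard of that Rule~3 move guarantees $x_{i}=1$ and $c_{i}\geq k$ beforehand, and the command of Rule~3 at $P_{i}$ sets $x_{i}:=0$ while leaving $c_{i}$ untouched (it only updates the counters of $P_{i}$'s neighbours); so immediately afterwards $x_{i}=0$ and $c_{i}\geq k$. Moreover, since the guard of Rule~3 also requires $c_{i}=\textit{Count}_{i}()$, Lemma~\ref{LEM:MKDOM:C-EQ-COUNT} ensures that $c_{i}=\textit{Count}_{i}()$ continues to hold forever after, which is convenient below.

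For the inductive step, let $P_{m}$ be the process performing the next move. If $P_{m}\notin N_{i}\cup\{P_{i}\}$, the move changes neither $x_{i}$ nor $c_{i}$. If $P_{m}=P_{i}$, then by the induction hypothesis Rule~2 is disabled ($c_{i}\geq k$) and Rule~3 is disabled ($x_{i}=0$), while Rule~1 is disabled because $c_{i}=\textit{Count}_{i}()$; so $P_{i}$ cannot move at all. If $P_{m}\in N_{i}$ uses Rule~1, again neither $x_{i}$ nor $c_{i}$ changes. If $P_{m}\in N_{i}$ uses Rule~2, then $c_{i}$ is only possibly incremented, so $c_{i}\geq k$ is preserved and $x_{i}$ is untouched. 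The one delicate case is $P_{m}\in N_{i}$ using Rule~3: this sets $x_{m}:=0$ and decrements $c_{i}$ by one (when $c_{i}>0$). But the guard of Rule~3 at $P_{m}$ requires $x_{\ell}=1\lor c_{\ell}>k$ for every $P_{\ell}\in N_{m}$, and since $P_{i}\in N_{m}$ (links are bidirectional) and the induction hypothesis gives $x_{i}=0$, we must have $c_{i}>k$ just before this move; hence $c_{i}\geq k$ still holds after the decrement, and $x_{i}$ is unchanged. In every case the invariant survives, completing the induction; the final statement then follows as explained above.

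I expect the main obstacle to be getting the case analysis of the inductive step exactly right, and in particular noticing that a neighbour executing Rule~3 is harmless precisely because that neighbour's guard, together with $x_{i}=0$, forces $c_{i}$ to be strictly larger than $k$ before $c_{i}$ is decremented, so $c_{i}$ cannot drop below $k$. The remaining cases are routine once one observes that Rule~1 never touches $x$ or a neighbour's counter, that Rule~2 only increments neighbours' counters, and that $c_{i}=\textit{Count}_{i}()$ is preserved by Lemma~\ref{LEM:MKDOM:C-EQ-COUNT}.
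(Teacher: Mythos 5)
Your proof is correct and follows essentially the same route as the paper: both arguments hinge on the observation that after $P_{i}$'s first Rule~3 move we have $x_{i}=0$ and $c_{i}\geq k$, and that a neighbour's Rule~3 guard (which demands $x_{i}=1\lor c_{i}>k$) can only decrement $c_{i}$ while $c_{i}>k$, so $c_{i}$ never falls below $k$ and Rule~2 --- hence a second Rule~3 --- is never enabled at $P_{i}$ again. Your version merely packages this as an explicit invariant preserved by induction over all subsequent moves, which is a somewhat more rigorous write-up of the same idea.
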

\begin{proof}
Suppose that $P_{i}$ moves by Rule~3.
After the move, we have
$x_{i} = 0$, 
$c_{i} = \textit{Count}_{i}()$ and
$c_{i} \geq k$.
Before $P_{i}$ moves by Rule~3 for the second time, 
$P_{i}$ must move by Rule~2.
Hence $c_{i} < k$, which is a part of the guard of Rule~2, 
must be true at $P_{i}$.
Because $c_{i} \geq k$ holds 
before $P_{i}$ moves by Rule~3 for the first time,
one or more neighbors $P_{j} \in N_{i}$ must move by Rule~3 
in order to satisfy the condition $c_{i} < k$.

When $c_{i}>k$ holds, 
some neighbor $P_{j}$ may move by Rule~3, and
the value of $c_{i}$ decreases.
However, when $c_{i}=k$ holds, 
the guard of Rule~3 is false at any neighbor $P_{j}$, and
no neighbor moves by Rule~3 any more.
Hence $c_{i}<k$ never becomes true, and
$P_{i}$ does not move by Rule~2, 
which means that
$P_{i}$ does not move by Rule~3 again.
\qed
\end{proof}

\begin{lemma}
\label{LEM:MKDOM:R2:TWICE}
For each process $P_{i} \in V$, 
the number of moves by Rule~2
is at most twice.
\end{lemma}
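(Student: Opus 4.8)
The plan is to track the evolution of the single bit $x_i$ across an execution and to lean on the bounds already proved for Rules~1 and 3. First I would note, by inspection of Algorithm~\ref{ALG:MkDom11}, that $x_i$ is written \emph{only} by $P_i$ itself: when a neighbor $P_j$ executes Rule~2 or Rule~3 it modifies only the counters $c_{i}$, never $x_{i}$, and Rule~1 modifies only $c_i$. Consequently every change of $x_i$ is a move of $P_i$, and it is either a Rule~2 move (which requires $x_i = 0$ in the pre-state and leaves $x_i = 1$) or a Rule~3 move (which requires $x_i = 1$ and leaves $x_i = 0$).

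Next I would argue by contradiction. Suppose $P_i$ executes Rule~2 at three distinct times $t_1 < t_2 < t_3$. After the move at $t_1$ we have $x_i = 1$, while immediately before the move at $t_2$ the guard of Rule~2 forces $x_i = 0$; since the only way $x_i$ can drop from $1$ to $0$ is a Rule~3 move of $P_i$, there is at least one Rule~3 move of $P_i$ in the interval $(t_1, t_2)$. The same reasoning applied to $(t_2, t_3)$ yields a second Rule~3 move of $P_i$. This contradicts Lemma~\ref{LEM:MKDOM:R3:ONCE}, which states that $P_i$ moves by Rule~3 at most once. Hence $P_i$ moves by Rule~2 at most twice.

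I expect the only delicate point to be the first step. Because the R(1)W(1) model allows a process's local variables to be overwritten by neighbors, one must explicitly confirm from the code that no rule other than $P_i$'s own Rules~2 and 3 assigns to $x_i$; once that invariant is in hand, the counting argument is immediate and needs no analysis of the counters $c_j$ or of the remaining conjuncts in the guards.
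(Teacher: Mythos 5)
Your proof is correct and follows essentially the same route as the paper's: three Rule~2 moves would force two intervening Rule~3 moves of $P_i$, contradicting Lemma~\ref{LEM:MKDOM:R3:ONCE}. The paper states this in two lines; your version merely makes explicit the (correct) supporting observation that $x_i$ is written only by $P_i$'s own Rules~2 and~3, so the alternation of $x_i$ between $0$ and $1$ is governed entirely by $P_i$'s moves.
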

\begin{proof}
For $P_{i}$ to move by Rule~2 three times, 
$P_{i}$ must move by Rule~3 twice.
But it is impossible 
by Lemma~\ref{LEM:MKDOM:R3:ONCE}.
\qed
\end{proof}

\begin{theorem}
\label{THR:MkDom:SS}
\textsf{MkDom11} is self-stabilizing 
with respect to $\Lambda_{\textrm{MkDom}}$
under the unfair central daemon in the R(1)W(1) model, and
its time complexity is $O(n)$.
\end{theorem}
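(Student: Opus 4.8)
The plan is to obtain both self-stabilization and the move bound as a direct composition of the lemmas already proved, with no new combinatorial argument needed. For \textbf{closure}, I would invoke Lemma~\ref{LEM:MKDOM:CLOSURE}, which characterizes legitimacy exactly as the absence of enabled processes: hence for any $\gamma \in \Lambda_{\textrm{MkDom}}$ there is no enabled process in $\gamma$, so the hypothesis of the closure condition (``if there exists an enabled process in $\gamma$\,'') is never satisfied and closure holds vacuously.

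For \textbf{convergence} together with the $O(n)$ time complexity, I would show that every maximal execution is finite by a per-process counting argument. Since \textsf{MkDom11} has only three rules, the moves made by a fixed process $P_i$ are partitioned into Rule~1 moves, Rule~2 moves, and Rule~3 moves. By Lemma~\ref{LEM:MKDOM:R1:ONCE} at most one is by Rule~1, by Lemma~\ref{LEM:MKDOM:R2:TWICE} at most two are by Rule~2, and by Lemma~\ref{LEM:MKDOM:R3:ONCE} at most one is by Rule~3, so $P_i$ executes at most four moves in total. Summing over the $n$ processes, any execution contains at most $4n = O(n)$ moves; in particular there is no infinite execution. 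Therefore every maximal execution terminates in a configuration with no enabled process, and by the ($\Leftarrow$) direction of Lemma~\ref{LEM:MKDOM:CLOSURE} that terminal configuration lies in $\Lambda_{\textrm{MkDom}}$. This proves convergence from an arbitrary initial configuration and simultaneously yields the $O(n)$ move complexity; and because the per-process bounds hold irrespective of which enabled process the scheduler picks at each step, the conclusion holds under the \emph{unfair} central daemon.

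Since the argument is essentially bookkeeping over the lemmas, I do not anticipate a real obstacle. The one point I would take care to state explicitly is that when $P_i$ is selected and, as part of its single Rule~2 or Rule~3 move in the R(1)W(1) model, updates the counters $c_j$ of its neighbors $P_j$, those updates are part of $P_i$'s move and are \emph{not} counted as moves of the $P_j$'s; thus the ``at most four moves per process'' accounting is not disturbed by cross-writes. A secondary subtlety is the additivity of the three per-rule counts — in particular that neighbours repeatedly flipping $x_i$ cannot force $P_i$ to re-run Rule~2 more than twice — but this is precisely what Lemma~\ref{LEM:MKDOM:R2:TWICE}, together with Lemma~\ref{LEM:MKDOM:R3:ONCE}, already guarantees, so no further work is required.
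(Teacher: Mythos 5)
Your proposal is correct and follows essentially the same route as the paper: closure from Lemma~\ref{LEM:MKDOM:CLOSURE}, and convergence plus the $O(n)$ bound by summing the per-process move bounds of Lemmas~\ref{LEM:MKDOM:R1:ONCE}, \ref{LEM:MKDOM:R3:ONCE} and \ref{LEM:MKDOM:R2:TWICE} to get at most four moves per process and $4n$ in total. Your explicit remark that cross-writes to neighbors' counters count as part of the writer's move is a useful clarification the paper leaves implicit, but it does not change the argument.
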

\begin{proof}
The closure condition holds
by Lemma~\ref{LEM:MKDOM:CLOSURE}. 
The convergence condition holds
because the number of moves is bounded at each process. 
By lemmas~\ref{LEM:MKDOM:R1:ONCE}, 
\ref{LEM:MKDOM:R3:ONCE} and
\ref{LEM:MKDOM:R2:TWICE}, 
each process $P_{i} \in V$ moves
by Rule~1 at most once, 
by Rule~3 at most once, and
by Rule~2 at most twice.
Hence $P_{i}$ moves at most four times, and
the total number of moves is bounded by $4n$.
\qed
\end{proof}

\section{Maximal $k$-dependent set in the R(1)W(1) model}
\label{SEC:SSKDEPSET}

In this section,
we propose a self-stabilizing distributed algorithm \textsf{MkDep11}
for the maximal $k$-dependent set problem
assuming the R(1)W(1) model under the unfair central daemon.
For each integer $k \geq 0$, 
a $k$-dependent set $S$ of a graph $G=(V,E)$
is a subset of vertices $S \subseteq V$ such that,
for each vertex $P_{i} \in S$, 
$|N_{i} \cap S| \leq k$ holds.
A $k$-dependent set $S$ is maximal iff
any superset of $S$ is not a $k$-dependent set.
The definition is a generalization of 
maximal independent set (MIS), 
i.e., 
the definitions of
maximal $0$-dependent set and maximal independent set 
are equivalent.

An $S \subseteq V$ is a maximal $k$-dependent set 
iff 
the following local conditions hold for each $P_{i} \in V$, 
and we design a distributed algorithm based on these local conditions. 
\begin{itemize}
\item Local $k$-Dependency: 
      $P_{i} \in S \Rightarrow
           |\{ P_{j} \in N_{i} \cap S \}| \leq k$.
\item Local Maximality:
      $P_{i} \in V \backslash S \Rightarrow
           \exists P_{j} \in N_{i} \cap S : 
                   |\{ P_{k} \in N_{j} \cap S \}| \geq k$.
\end{itemize}

Several self-stabilizing distributed algorithms for 
the $k$-dependent set problem are proposed. 
For the case of $k=0$, which is equivalent to MIS, 
Shukla et al.\ \cite{Shukla1995},
Ikeda et al.\ \cite{Ikeda2002} and 
Turau \cite{Turau2007}  
proposed algorithms in the ordinary state-reading model.
For general case of $k > 0$,
Turau \cite{Turau2012}
proposed an algorithm in the expression model under the central daemon, and
its time complexity is $O(n)$.
In this section, for the general case of $k \geq 0$, 
we propose an algorithm in the R(1)W(1) model under the central daemon
whose time complexity is $O(n)$.

\subsection{The proposed algorithm \textsf{MkDep11}}

The proposed algorithm \textsf{MkDep11} is presented
in Algorithm~\ref{ALG:MkDep11}.
Each process $P_{i}$ maintains two local variables
$x_{i}$ and $c_{i}$.
$P_{i}$ is in a $k$-dependent set iff $x_{i} = 1$, and
$c_{i}$ counts the number of neighbors $P_{j}$ such that $x_{j} = 1$.
We define a macro $\textit{Count}_{i}()$ which represents 
the number of neighbors $P_{j}$ such that $x_{j}=1$.
We say that $c_{i}$ is \emph{correct}
iff $c_{i} = \textit{Count}_{i}()$ holds.

The value of $c_{i}$ is maintained to be equal to $\textit{Count}_{i}()$, 
however,
it may not in the initial configuration
because of the self-stabilizing problem setting.
In the ordinary state-reading model, 
even if $c_{i}$ is equal to $\textit{Count}_{i}()$, 
it immediately becomes unequal
if a neighbor $P_{j}$ of $P_{i}$ changes the value of $x_{j}$.
To maintain $c_{i}$ to be correct in the R(1)W(1) model,
$P_{i}$ increments (resp., decrements) $c_{j}$ by one
for each neighbor $P_{j}$
when $P_{i}$ changes the value of $x_{j}$ 
from 0 to 1 (resp., 1 to 0). 
Then, 
if $c_{i}$ becomes correct,  
$c_{j}$ is maintained correctly thereafter.

There are three rules in \textsf{MkDep11}.
\begin{itemize}
\item Rule~1:
      If $c_{i}$ is incorrect, $P_{i}$ fixes it.
\item Rule~2:
      This is a rule for local $k$-dependency condition.
      If $x_{i}=1$ and
      the number of neighbors $P_{j}$ such that $x_{j} = 1$ is more than $k$, 
      $P_{i}$ changes $x_{i}$ from 1 to 0
      in order to satisfy the local $k$-dependency condition.
      In addition, 
      $P_{i}$ decrements $c_{j}$ by one for each neighbor $P_{j}$, 
      however,
      $P_{i}$ does not for $P_{j}$ such that $c_{j} = 0$ 
      because $c_{j}$ is obviously incorrect. 
\item Rule~3:
      This is a rule for maximality condition.
      $P_{i}$ changes $x_{i}$ from 0 to 1
      if such a change does not violate 
      the local condition of $k$-dependency. 
      If $P_{i}$ changes $x_{i}$, 
      it increments $c_{j}$ by one for each neighbor $P_{j}$, 
      however,
      $P_{i}$ does not for $P_{j}$ such that $c_{j} \geq |N_{j}|$ 
      because $c_{j}$ is obviously incorrect. 
\end{itemize}

\begin{algorithm}
\caption{Self-stabilizing distributed maximal $k$-dependent set algorithm \textsf{MkDep11}}
\label{ALG:MkDep11}
\KwLVar\\
\Indp
  $x_{i} \in \{1, 0\}$ 
    \tcp*{whether a member of the set or not}
  $c_{i} \in \{0,1,..., |N_{i}|\}$ 
    \tcp*{\#neighbors s.t. $x_{j}=1$}
\Indm
\KwMac\\
\Indp
  $\textit{Count}_{i}() \equiv |\{ P_{j} \in N_{i} \mid x_{j} = 1 \}|$ \;
\Indm
\BlankLine
\Fn(\tcp*[h]{Fix the counter})%
   {\normalshape Rule~1}{
  \If{%
     $c_{i} \ne \textit{Count}_{i}()$
  }{%
    $c_{i} := \textit{Count}_{i}()$
  }
}
\Fn(\tcp*[h]{$k$-Dependency})%
   {\normalshape Rule~2}{
  \If{%
    $x_{i} = 1 
       \land 
     c_{i} = \textit{Count}_{i}()
       \land 
     c_{i} > k$
  }{%
    $x_{i} := 0$ \;
    for each $P_{j} \in N_{i}$ s.t. $c_{j} > 0$: \;
    \quad  $c_{j} := c_{j} - 1$
  }
}
\Fn(\tcp*[h]{Maximality})%
   {\normalshape Rule~3}{
  \If{%
    $x_{i} = 0 
       \land 
     c_{i} = \textit{Count}_{i}()
       \land 
     c_{i} \leq k
       \land 
     (\forall P_{j} \in N_{i} : x_{j} = 0 \lor c_{j} < k)$
  }{%
    $x_{i} := 1$ \;
    for each $P_{j} \in N_{i}$ s.t. $c_{j} < |N_{j}|$: \;
    \quad  $c_{j} := c_{j} + 1$
  }
}
\end{algorithm}

\subsection{The proof of correctness of \textsf{MkDep11}}

By $\Gamma_{\textnormal{MkDep}}$, 
we denote the set of all configurations of \textsf{MkDep11}.
A configuration $\gamma$ of \textsf{MkDep11} is legitimate
iff the following three conditions are satisfied
for each $P_{i} \in V$.
\begin{itemize}
\item Correctness of the count: 
      $c_{i} = \textit{Count}_{i}()$
\item $k$-Dependency: 
      $x_{i}=1 \Rightarrow
           c_{i} \leq k $.
\item Maximality:
      $x_{i}=0 \Rightarrow
          c_{i} > k \lor
          \exists P_{j} \in N_{i} : x_{j} = 1 \land c_{j} \geq k$.
\end{itemize}

By $\Lambda_{\textnormal{MkDep}}$, 
we denote the set of legitimate configurations of \textsf{MkDep11}.

\begin{lemma}
\label{LEM:MKDEP:CLOSURE}
A configuration $\gamma$ is legitimate 
iff 
no process is enabled. 
\end{lemma}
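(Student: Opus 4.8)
The plan is to mirror the structure of Lemma~\ref{LEM:MKDOM:CLOSURE}, proving the two directions of the equivalence separately by unfolding the guards of the three rules of \textsf{MkDep11} against the three legitimacy conditions.

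For the ($\Rightarrow$) direction, assume $\gamma$ is legitimate. Fix a process $P_i$. Since the \emph{correctness of the count} condition gives $c_i = \textit{Count}_i()$, the guard of Rule~1 is false, so $P_i$ is not enabled by Rule~1. For Rule~2: its guard requires $x_i = 1 \land c_i = \textit{Count}_i() \land c_i > k$; but the \emph{$k$-Dependency} condition says $x_i = 1 \Rightarrow c_i \leq k$, so if $x_i = 1$ then $c_i \leq k$ and the conjunct $c_i > k$ fails; if $x_i = 0$ the first conjunct fails. Either way Rule~2 is disabled. For Rule~3: its guard requires $x_i = 0 \land c_i = \textit{Count}_i() \land c_i \leq k \land (\forall P_j \in N_i : x_j = 0 \lor c_j < k)$; but the \emph{Maximality} condition says $x_i = 0 \Rightarrow c_i > k \lor \exists P_j \in N_i : x_j = 1 \land c_j \geq k$. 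If $x_i \ne 0$ the guard fails immediately; if $x_i = 0$ then either $c_i > k$ (contradicting the $c_i \leq k$ conjunct) or there is a neighbor $P_j$ with $x_j = 1 \land c_j \geq k$, which negates the last conjunct $\forall P_j : x_j = 0 \lor c_j < k$. Hence Rule~3 is disabled. So every process is disabled in $\gamma$.

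For the ($\Leftarrow$) direction, assume no process is enabled in $\gamma$ and fix $P_i$. Since Rule~1 is disabled, $\lnot(c_i \ne \textit{Count}_i())$, i.e.\ $c_i = \textit{Count}_i()$ — this is exactly the \emph{correctness of the count} condition. For the remaining conditions I use this fact to simplify the other two guards. Since Rule~2 is disabled, $\lnot(x_i = 1 \land c_i = \textit{Count}_i() \land c_i > k)$; knowing $c_i = \textit{Count}_i()$ holds, this forces $x_i = 1 \Rightarrow c_i \leq k$, the \emph{$k$-Dependency} condition. Since Rule~3 is disabled, $\lnot(x_i = 0 \land c_i = \textit{Count}_i() \land c_i \leq k \land (\forall P_j \in N_i : x_j = 0 \lor c_j < k))$; again using $c_i = \textit{Count}_i()$, this yields $x_i = 0 \Rightarrow (c_i > k \lor \lnot(\forall P_j \in N_i : x_j = 0 \lor c_j < k))$, and the negated universal is precisely $\exists P_j \in N_i : x_j = 1 \land c_j \geq k$, giving the \emph{Maximality} condition. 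Thus $\gamma \in \Lambda_{\textnormal{MkDep}}$.

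I do not expect a real obstacle here: the lemma is a purely syntactic restatement of "no guard is satisfiable" in terms of the legitimacy predicates, and the only subtlety is correctly pushing the negation through the quantifier in Rule~3's guard and remembering to discharge the $c_i = \textit{Count}_i()$ conjunct (justified by Rule~1 being disabled) before reading off the $k$-Dependency and Maximality conditions. Note that the \texttt{for each} loop bodies in the commands are irrelevant to this lemma — only the \texttt{if}-guards matter for enabledness — so no reasoning about the counter-maintenance side effects is needed.
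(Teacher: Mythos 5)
Your proof is correct and follows exactly the approach the paper uses for the analogous Lemma~\ref{LEM:MKDOM:CLOSURE} (the paper omits the proof for \textsf{MkDep11}, as it is symmetric to the \textsf{MkDom11} case): unfold each rule's guard against the three legitimacy conditions in both directions, using the disabledness of Rule~1 to discharge the $c_i = \textit{Count}_i()$ conjunct. Your handling of the negated universal in Rule~3's guard is accurate, so no gap remains.
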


\begin{lemma}
\label{LEM:MKDEP:C-EQ-COUNT}
For each process $P_{i} \in V$, 
if the condition 
$c_{i} = \textit{Count}_{i}()$ 
holds, 
it remains so thereafter.
\end{lemma}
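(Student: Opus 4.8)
The plan is to mirror the bookkeeping argument of Lemma~\ref{LEM:MKDOM:C-EQ-COUNT}, tracking every move that could change either side of the equation $c_{i} = \textit{Count}_{i}()$, and arguing by induction over the execution. The quantity $\textit{Count}_{i}()$ depends only on the variables $x_{j}$ with $P_{j} \in N_{i}$, and by inspection of \textsf{MkDep11} such an $x_{j}$ is modified only when $P_{j}$ itself executes Rule~2 or Rule~3. Symmetrically, $c_{i}$ is modified only when $P_{i}$ executes Rule~1, or when a neighbor $P_{j} \in N_{i}$ executes Rule~2 or Rule~3 (the only rules touching neighbors' counters). Moves by processes that are neither $P_{i}$ nor a neighbor of $P_{i}$, as well as moves by $P_{i}$ via Rule~2 or Rule~3 (which change $x_{i}$ and the $c_{j}$'s of neighbors, but neither $c_{i}$ nor $\textit{Count}_{i}()$), preserve the equality trivially; and $P_{i}$ cannot execute Rule~1 while $c_{i} = \textit{Count}_{i}()$ holds, since the guard of Rule~1 is $c_{i} \ne \textit{Count}_{i}()$. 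Hence the only case to examine is a move by some neighbor $P_{j} \in N_{i}$ via Rule~2 or Rule~3, starting from a configuration where $c_{i} = \textit{Count}_{i}()$.

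For Rule~2 at $P_{j}$: here $x_{j} = 1$ before the move, so $P_{j}$ is counted in $\textit{Count}_{i}()$, whence $c_{i} = \textit{Count}_{i}() \geq 1 > 0$; therefore $P_{i}$ is not among the neighbors skipped by the ``$c_{j} = 0$'' exception, so $P_{j}$ decrements $c_{i}$ by one, while $x_{j}$ goes from $1$ to $0$ and no other neighbor's $x$ changes, so $\textit{Count}_{i}()$ also drops by exactly one, and equality is restored. For Rule~3 at $P_{j}$, the symmetric situation: $x_{j} = 0$ before the move, so $P_{j}$ is not counted, hence $\textit{Count}_{i}() \leq |N_{i}| - 1$ and thus $c_{i} = \textit{Count}_{i}() < |N_{i}|$; so the ``$c_{j} \geq |N_{j}|$'' exception does not apply to $P_{i}$, $P_{j}$ increments $c_{i}$ by one, $\textit{Count}_{i}()$ rises by one, and equality holds again after the move. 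Since these exhaust the ways either side can change, the invariant, once true, persists through every subsequent move.

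The argument is essentially routine; the only point that needs care — and the reason this is not literally identical to Lemma~\ref{LEM:MKDOM:C-EQ-COUNT} — is the pair of conditional updates in Rules~2 and~3 that skip a neighbor whose counter looks ``obviously incorrect.'' The crux is precisely the observation above that the hypothesis $c_{i} = \textit{Count}_{i}()$ forces $c_{i}$ away from the boundary value $0$ in the Rule~2 case and from $|N_{i}|$ in the Rule~3 case, so those exceptions never fire on $P_{i}$ exactly in the situations where the compensating update is required.
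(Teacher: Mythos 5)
Your proof is correct and follows the same compensating-update argument the paper uses for the analogous Lemma~\ref{LEM:MKDOM:C-EQ-COUNT} (the paper states Lemma~\ref{LEM:MKDEP:C-EQ-COUNT} without proof). In fact your version is more complete than the paper's one-line argument: you explicitly verify that the hypothesis $c_{i} = \textit{Count}_{i}()$ rules out the boundary cases $c_{i}=0$ (Rule~2) and $c_{i} \geq |N_{i}|$ (Rule~3) in which the conditional decrement/increment would be skipped, which is exactly the point the paper's terse proof leaves implicit.
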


\begin{lemma}
\label{LEM:MKDEP:R1:ONCE}
For each process $P_{i} \in V$, 
the number of moves by Rule~1
is at most once, and
if $P_{i}$ executes Rule~1, 
it is the first move of $P_{i}$.
\end{lemma}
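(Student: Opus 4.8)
The plan is to mirror the argument used for Lemma~\ref{LEM:MKDOM:R1:ONCE}, exploiting the stability property established in Lemma~\ref{LEM:MKDEP:C-EQ-COUNT}. The whole point is that Rule~1 is enabled at $P_{i}$ exactly when $c_{i} \ne \textit{Count}_{i}()$, and all of Rules~1, 2 and 3 leave $P_{i}$ in a state with $c_{i} = \textit{Count}_{i}()$ (Rule~1 by assignment, Rules~2 and~3 because $c_{i} = \textit{Count}_{i}()$ is part of their guard and the command changes only $c_{j}$ for $j \ne i$, not $c_{i}$ or $x_{i}$'s effect on $\textit{Count}_{i}()$). Combined with Lemma~\ref{LEM:MKDEP:C-EQ-COUNT}, once $c_{i} = \textit{Count}_{i}()$ holds it holds forever, so Rule~1 can never be re-enabled at $P_{i}$ afterwards.

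Concretely, I would split into two cases according to $P_{i}$'s first move. First, suppose $P_{i}$'s first move is by Rule~1. Then immediately after that move $c_{i} = \textit{Count}_{i}()$, and by Lemma~\ref{LEM:MKDEP:C-EQ-COUNT} this equality persists through every subsequent move of any process; hence the guard of Rule~1 stays false at $P_{i}$ forever and $P_{i}$ never moves by Rule~1 again. Second, suppose $P_{i}$'s first move is by Rule~2 or Rule~3. Since $c_{i} = \textit{Count}_{i}()$ is a conjunct of the guards of Rule~2 and Rule~3, this equality held just before that first move, and again by Lemma~\ref{LEM:MKDEP:C-EQ-COUNT} it holds from then on; so Rule~1 is never enabled at $P_{i}$ at any point, and in particular $P_{i}$ performs no Rule~1 move at all. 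Putting the two cases together yields both claims: $P_{i}$ moves by Rule~1 at most once, and such a move, if it occurs, is $P_{i}$'s first move.

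I do not anticipate a real obstacle here; the only point requiring a line of care is checking that the commands of Rules~2 and~3 do not secretly falsify $c_{i} = \textit{Count}_{i}()$ at the acting process itself — i.e., that $P_{i}$ updating $x_{i}$ together with the bookkeeping updates to the $c_{j}$'s is exactly the scenario covered by Lemma~\ref{LEM:MKDEP:C-EQ-COUNT} from $P_{i}$'s own viewpoint (it is: $P_{i}$ changes $x_{i}$, and each neighbor's counter, including viewing $P_{i}$ as a neighbor of $P_{j}$, is adjusted accordingly, while $c_{i}$ itself and the $x$-values of $P_{i}$'s neighbors are untouched). Once that is noted, the proof is a short two-case argument invoking Lemma~\ref{LEM:MKDEP:C-EQ-COUNT}.

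\qed
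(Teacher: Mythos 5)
Your proof is correct and follows essentially the same route the paper takes: the paper omits the proof of this lemma as analogous to Lemma~\ref{LEM:MKDOM:R1:ONCE}, whose proof is exactly your two-case split on $P_{i}$'s first move combined with the persistence of $c_{i} = \textit{Count}_{i}()$ from Lemma~\ref{LEM:MKDEP:C-EQ-COUNT}. Your extra remark that the commands of Rules~2 and~3 leave $c_{i}$ and $\textit{Count}_{i}()$ untouched at the acting process (since $P_{i}$ is not its own neighbor) is a worthwhile detail the paper leaves implicit.
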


\begin{lemma}
\label{LEM:MKDEP:R3:ONCE}
For each process $P_{i} \in V$, 
the number of moves by Rule~3
is at most once.
\end{lemma}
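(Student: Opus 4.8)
The plan is to mirror the proof of Lemma~\ref{LEM:MKDOM:R3:ONCE}, using the symmetry between \textsf{MkDom11} and \textsf{MkDep11}: here Rule~3 raises $x_i$ from $0$ to $1$ and Rule~2 lowers it from $1$ to $0$. First I would note that immediately after $P_i$'s first move by Rule~3 we have $x_i = 1$, and, since the guard of Rule~3 contains $c_i = \textit{Count}_{i}()$ and $c_i \le k$, also $c_i \le k$. The heart of the argument is the claim that the predicate $x_i = 1 \land c_i \le k$ is preserved from that point on; the lemma follows immediately, since a second move by Rule~3 requires $x_i = 0$, contradicting the invariant.

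To establish preservation I would check every move that can change $x_i$ or $c_i$ while the predicate holds. The value $x_i$ is written only by $P_i$ itself: Rule~2 is guarded by $c_i > k$, which is incompatible with $c_i \le k$, and Rule~3 requires $x_i = 0$, incompatible with $x_i = 1$; hence $x_i$ stays $1$. The value $c_i$ is written by $P_i$ only through Rule~1, which is ruled out after the first move by Lemma~\ref{LEM:MKDEP:R1:ONCE}, or by a neighbor $P_j$. A move of $P_j$ by Rule~2 can only decrement $c_i$, so $c_i \le k$ is maintained. A move of $P_j$ by Rule~3 increments $c_i$, but its guard contains $\forall P_k \in N_j : x_k = 0 \lor c_k < k$; taking $P_k = P_i$ and using $x_i = 1$ gives $c_i < k$ at that instant, so after the increment $c_i \le k$ still holds. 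Thus the predicate is stable, and $P_i$ never becomes enabled for Rule~2 again, so it never moves by Rule~3 again.

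The only point requiring care --- and the place I expect to need the serialization by the central daemon --- is the neighbor's Rule~3 increment: one must evaluate $P_j$'s guard at the precise step the daemon selects $P_j$ (not at an earlier moment) to conclude that $c_i$ cannot be pushed past $k$ once $x_i = 1$. I would also record that Lemma~\ref{LEM:MKDEP:R1:ONCE} is exactly what forbids the awkward scenario in which $P_i$, after moving by Rule~3, resets $c_i$ via Rule~1 to a value exceeding $k$; alternatively, Lemma~\ref{LEM:MKDEP:C-EQ-COUNT} could be used to keep $c_i$ equal to $\textit{Count}_{i}()$ throughout.
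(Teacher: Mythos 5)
Your proof is correct and follows essentially the same route the paper takes for the mirrored Lemma~\ref{LEM:MKDOM:R3:ONCE} (whose proof this lemma implicitly inherits by symmetry): after the first Rule~3 move the key fact is that a neighbor's Rule~3 guard, evaluated atomically with $x_i=1$, forces $c_i<k$ before any increment, so $c_i\le k$ persists and Rule~2 can never re-enable Rule~3 at $P_i$. Your phrasing as an explicit invariant $x_i=1\land c_i\le k$, with the side remark that Lemma~\ref{LEM:MKDEP:R1:ONCE} excludes a later Rule~1 reset of $c_i$, is a slightly more systematic write-up of the same argument.
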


\begin{lemma}
\label{LEM:MKDEP:R2:TWICE}
For each process $P_{i} \in V$, 
the number of moves by Rule~2
is at most twice.
\end{lemma}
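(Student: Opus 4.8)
The plan is to mirror the argument used for Rule~2 of \textsf{MkDom11} (Lemma~\ref{LEM:MKDOM:R2:TWICE}), exploiting the complementary roles that Rules~2 and~3 play on the variable $x_{i}$. The key observation is that in \textsf{MkDep11} a move by Rule~2 requires $x_{i}=1$ in its guard and sets $x_{i}:=0$, whereas the only rule that can set $x_{i}$ from $0$ back to $1$ is Rule~3; Rule~1 never touches $x_{i}$ at all. So I would first confirm this transition structure by inspecting the three commands: Rule~1 only assigns to $c_{i}$, Rule~2 flips $x_{i}$ from $1$ to $0$, and Rule~3 flips $x_{i}$ from $0$ to $1$. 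Thus every $0 \to 1$ transition of $x_{i}$ is caused by a Rule~3 move, and every $1 \to 0$ transition by a Rule~2 move.

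Given this, I argue by contradiction. Suppose $P_{i}$ moves by Rule~2 three times. Immediately after each Rule~2 move we have $x_{i}=0$, while the guard of Rule~2 demands $x_{i}=1$ at the moment it fires; hence between two consecutive Rule~2 moves of $P_{i}$ the value of $x_{i}$ must change from $0$ to $1$ at least once. Since Rule~3 is the unique rule producing such a transition, at least one Rule~3 move of $P_{i}$ must occur strictly between consecutive Rule~2 moves. Three Rule~2 moves therefore force at least two intervening Rule~3 moves of $P_{i}$.

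This contradicts Lemma~\ref{LEM:MKDEP:R3:ONCE}, which bounds the number of Rule~3 moves of $P_{i}$ by one. Hence $P_{i}$ cannot move by Rule~2 three times, and the number of moves by Rule~2 is at most twice, as claimed.

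The reasoning is essentially immediate once the $0 \leftrightarrow 1$ transition discipline is pinned down, so I do not anticipate a genuine obstacle. The one step that warrants care is the claim that Rule~3 is the \emph{sole} source of a $0 \to 1$ transition of $x_{i}$ (and that Rule~1 leaves $x_{i}$ unchanged); establishing this cleanly from the command inspection is the only point I expect to spell out explicitly, and everything else follows by the counting argument above.
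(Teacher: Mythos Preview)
Your argument is correct and mirrors the paper's approach: the paper omits this proof as the direct analogue of Lemma~\ref{LEM:MKDOM:R2:TWICE}, whose content is precisely that three Rule~2 moves would force two Rule~3 moves, contradicting Lemma~\ref{LEM:MKDEP:R3:ONCE}. For the command inspection you flag, remember that in the R(1)W(1) model you should also observe that a neighbor's move never writes to $x_{i}$ (only to $c_{i}$), which is immediate from the commands and completes the claim that Rule~3 is the sole source of a $0\to 1$ transition of $x_{i}$.
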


\begin{theorem}
\label{THR:MkDep:SS}
\textsf{MkDep11} is self-stabilizing 
with respect to $\Lambda_{\textrm{MkDep}}$
under the unfair central daemon in the R(1)W(1) model, and
its time complexity is $O(n)$.
\end{theorem}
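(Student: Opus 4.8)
The plan is to mirror the proof of Theorem~\ref{THR:MkDom:SS}, exploiting the fact that \textsf{MkDep11} is the structural dual of \textsf{MkDom11}: Rule~2 now flips $x_i$ from $1$ to $0$ (with symmetric decrements of the neighbors' counters), Rule~3 flips $x_i$ from $0$ to $1$ (with symmetric increments), and the roles of $\geq k$ and $\leq k$ (and of $<k$ and $>k$) are interchanged throughout. Accordingly, Lemmas~\ref{LEM:MKDEP:CLOSURE}, \ref{LEM:MKDEP:C-EQ-COUNT}, \ref{LEM:MKDEP:R1:ONCE}, \ref{LEM:MKDEP:R3:ONCE} and \ref{LEM:MKDEP:R2:TWICE} are the duals of Lemmas~\ref{LEM:MKDOM:CLOSURE}, \ref{LEM:MKDOM:C-EQ-COUNT}, \ref{LEM:MKDOM:R1:ONCE}, \ref{LEM:MKDOM:R3:ONCE} and \ref{LEM:MKDOM:R2:TWICE}, proved by the same arguments with this substitution.

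Granting these lemmas, the theorem follows in two short steps. For closure, Lemma~\ref{LEM:MKDEP:CLOSURE} states that a configuration is legitimate iff no process is enabled; hence from any $\gamma \in \Lambda_{\textrm{MkDep}}$ no process can move and closure holds vacuously. For convergence and the time bound, I would bound the number of moves of each process: by Lemmas~\ref{LEM:MKDEP:R1:ONCE}, \ref{LEM:MKDEP:R3:ONCE} and \ref{LEM:MKDEP:R2:TWICE}, every $P_i$ executes Rule~1 at most once, Rule~3 at most once, and Rule~2 at most twice, hence at most four moves in total. Therefore every maximal execution terminates after at most $4n = O(n)$ moves, in a configuration where no process is enabled, which by Lemma~\ref{LEM:MKDEP:CLOSURE} lies in $\Lambda_{\textrm{MkDep}}$. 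Together with closure this gives self-stabilization with respect to $\Lambda_{\textrm{MkDep}}$ under the unfair central daemon in the R(1)W(1) model, and the $O(n)$ move count is the claimed time complexity.

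The assembly above is routine; the real content—exactly as in the $k$-dominating set case—is Lemma~\ref{LEM:MKDEP:R3:ONCE}, the step I would treat most carefully. After $P_i$ executes Rule~3 we have $x_i = 1$, $c_i = \textit{Count}_{i}()$ and $c_i \leq k$; a second Rule~3 move requires $P_i$ to first execute Rule~2, which needs $c_i > k$ to become true again. The key observation is that, as long as $x_i$ remains $1$, the value $c_i$ can be increased only by a neighbor $P_j$ executing its own Rule~3, whose guard contains the conjunct $\forall P_\ell \in N_j : x_\ell = 0 \lor c_\ell < k$; instantiating $P_\ell = P_i$ and using $x_i = 1$ forces $c_i < k$. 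Hence once $c_i$ has reached the value $k$ no neighbor can push it higher, $c_i > k$ is never re-established, $P_i$ never re-enables Rule~2, and so never re-executes Rule~3. Lemma~\ref{LEM:MKDEP:R2:TWICE} then follows because a third Rule~2 move would require a second Rule~3 move, and Lemmas~\ref{LEM:MKDEP:C-EQ-COUNT} and \ref{LEM:MKDEP:R1:ONCE} are immediate from the fact that every flip of $x_j$ is accompanied by the matching increment or decrement of $c_i$, exactly as in \textsf{MkDom11}.
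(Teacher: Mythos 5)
Your proposal is correct and matches the paper's intent exactly: the paper states Theorem~\ref{THR:MkDep:SS} and Lemmas~\ref{LEM:MKDEP:CLOSURE}--\ref{LEM:MKDEP:R2:TWICE} without proof, leaving them as duals of the \textsf{MkDom11} development, and your assembly (closure from Lemma~\ref{LEM:MKDEP:CLOSURE}, convergence and the $4n$ move bound from the per-rule move counts) is precisely the paper's proof of Theorem~\ref{THR:MkDom:SS} under the stated duality. Your careful treatment of the key step, Lemma~\ref{LEM:MKDEP:R3:ONCE} --- that a neighbor's Rule~3 guard forces $c_i < k$ whenever it increments $c_i$ while $x_i = 1$, so $c_i$ never exceeds $k$ and Rule~2 is never re-enabled --- is the correct dual of the paper's argument for Lemma~\ref{LEM:MKDOM:R3:ONCE}.
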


\section{The transformer to the message passing model}
\label{SEC:TRANSFORMER}

In this section,
we propose {an example of} a transformer \textsf{TrR1W1} for 
a self-stabilizing algorithm in the R(1)W(1) model
to execute in the synchronous message passing model.
The transformer adopts randomized voting mechanism 
to simulate the state-reading model and the central daemon.
The proposed transformer is presented in Algorithm~\ref{TR:ALG:TRR1W1}.
We use the following terms:
a \emph{target algorithm} (e.g., \textsf{MMat11}) 
is an algorithm in the R(1)W(1) model to be simulated, and 
a \emph{transformed algorithm} 
is an algorithm in the synchronous message passing model 
transformed by our transformer \textsf{TrR1W1}.
We assume a network $G=(V,E)$ of processes 
$V = \{ {\cal P}_{0}, {\cal P}_{1}, ... {\cal P}_{n-1} \}$
in the synchronous message passing model.
Each process ${\cal P}_{i}$ simulates 
$P_{i}$ of a target algorithm.

Let us explain the computational model.
We assume a synchronous message passing distributed model
with reliable communication.
Execution of processes are synchronized in \emph{round}. 
In each round, 
each process synchronously
sends a message by {\XBCAST} primitive, 
receives all messages from neighbors, and
updates its local variables by local computation.
The {\XBCAST} primitive broadcasts a message to direct neighbors, and
it is reliable, 
i.e.,  
each message sent by {\XBCAST} is not lost and received by direct neighbors.
In the self-stabilizing setting,
the assumption on the reliability of communication
may seem to be inadequate.
However, 
after the transformed target algorithm converges, 
any message loss does not break the legitimate configuration.
So, it is enough to assume that 
the communication is reliable during convergence.
The proposed transformer is described as a series of \emph{phases},  
each of which corresponds to a round of the synchronous execution model. 
We assume a synchronized clock
is available for each process, and
all processes execute the same phase at the same time.
(The transformer presented later consists of series of five phases, and
we call these fives phases \emph{cycle}.)
For each process, as initial knowledge,
an upper bound ${n'}$ on the number $n$ of processes is given.
We assume that ${n'} \leq \beta n$ holds 
for some constant $\beta \geq 1$, 
but $\beta \geq 1$ is unknown to any process.

To simulate the central daemon,
we use a randomized voting scheme so that
no two processes within two hops execute at the same time.
An enabled process selects a random number 
uniformly at random from $1,2,..., K {n'}$, 
where $K \geq 2$ is a constant, 
and an enabled process with the largest random number 
among enabled processes within two hops 
wins to execute a guarded command.

\subsection{The transformer}

First, we explain local variables of each process ${\cal P}_{i}$.
In general, 
each process $P_{i}$ of the target algorithm 
has one or more local variables.
However, 
for the sake of simplicity of explanation, 
it is assumed that each $P_{i}$ has a single local variable $x_{i}$.
The local variables of ${\cal P}_{i}$ of the transformed algorithm
include $x_{i}$ and some housekeeping variables.
The primary housekeeping variable is a cache.
Each ${\cal P}_{i}$ has 
a \emph{cache} $C_{i}[{\cal P}_{j}]$ of $x_{j}$ 
for each ${\cal P}_{j} \in N_{i}$.
Instead of reading $x_{j}$ of neighbor ${\cal P}_{j}$,
${\cal P}_{i}$ reads the cache $C_{i}[{\cal P}_{j}]$.
In case ${\cal P}_{i}$ updates the value of $x_{i}$, 
${\cal P}_{i}$ broadcasts the new value of $x_{i}$ to neighbors, 
and each neighbor ${\cal P}_{j}$ updates its cache.
In order to update the value of $x_{j}$ of some neighbor,
${\cal P}_{i}$ updates its cache for ${\cal P}_{j}$, and
${\cal P}_{i}$ broadcasts the new value of $x_{j}$ to neighbors. 
If ${\cal P}_{j}$ finds that $x_{j}$ is updated by ${\cal P}_{i}$, 
${\cal P}_{j}$ broadcasts the new value of $x_{j}$ to neighbors. 
Subsequently, each neighbor ${\cal P}_{k}$ of ${\cal P}_{j}$ updates its cache.
The correctness of cache contents is important 
to simulate the target algorithm in the message passing model.
In this paper, 
we call such a correctness \emph{cache coherency}.

\begin{definition}
\label{DEF:TR:CACHE-COHERENCY}
We say that cache is \emph{coherent}
iff, 
for each ${\cal P}_{i} \in V$, 
$C_{i}[{\cal P}_{j}] = x_{j}$ holds
for each ${\cal P}_{j} \in N_{i}$ and 
for each local variable $x_{j}$ of ${\cal P}_{j}$.
\end{definition}

The major local variables used by
the transformer at each ${\cal P}_{i}$ are as follows.
\begin{itemize}
\item $x_{i}$ 
      is to simulate the local variable of the target algorithm.
\item $C_{i}[{\cal P}_{j}]$ 
      is the cache of $x_{j}$ of ${\cal P}_{j} \in N_{i}$. 
\item $r_{i}$ 
      is a random number to 
      select a process to execute a guarded command of 
      the target algorithm.
\item $g_{i}$ 
      is true iff ${\cal P}_{i}$ is enabled.
\end{itemize}

The target algorithm 
in the R(1)W(1) model under the unfair central daemon 
is simulated in five phases.
The central daemon is simulated by 
the {distance-two} local mutual exclusion 
between processes in two hops
based on randomized voting.
That is, no two processes within two hops execute 
their guarded commands concurrently.

\begin{itemize}
\item Phase 1: 
      Each process ${\cal P}_{i}$ (locally) broadcasts the value of $x_{i}$.
      Each process ${\cal P}_{i}$ receives messages, and 
      it updates its cache $C_{i}[{\cal P}_{j}]$
      for each received message from ${\cal P}_{j} \in N_{i}$.
      Then, ${\cal P}_{i}$ computes in $g_{i}$ 
      whether some guards of the target algorithm is true or not, and
      if true, it generates a random number in $r_{i}$.

\item Phase 2: 
      If some guards of the target algorithm is true,
      ${\cal P}_{i}$ broadcasts a random number $r_{i}$.
      Subsequently, ${\cal P}_{i}$ receives messages from neighbors.
      If a process with the maximum random value is unique,
      let $w_{i}$ be the sender process ID of the maximum value.
      Here, $w_{i}$ is the winner candidate at ${\cal P}_{i}$.
      
\item Phase 3: 
      If a winner candidate is elected in the previous phase,
      ${\cal P}_{i}$ broadcasts the process ID of the winner candidate. 
      If ${\cal P}_{i}$ receives a message 
      from each neighbor ${\cal P}_{j} \in N_{i}$ and 
      ${\cal P}_{i}$ is the winner candidate at all neighbors, then
      ${\cal P}_{i}$ is the winner among processes within two hops, and
      it executes the command of the target algorithm.

\item Phase 4: 
      If ${\cal P}_{i}$ executed the command in the previous phase, 
      the local variables of ${\cal P}_{i}$ and neighbors are modified.
      ${\cal P}_{i}$ broadcasts the new values to neighbors.

\item Phase 5: 
      If the local variables of ${\cal P}_{i}$ are modified by some neighbor, 
      ${\cal P}_{i}$ broadcasts the new values to neighbors.
\end{itemize}

\begin{algorithm}
\caption{\textsf{TrR1W1} for each process ${\cal P}_{i} \in V$}
\label{TR:ALG:TRR1W1}
\KwConst\\
\Indp
$K \geq 2$ 
  \tcp*{Design parameter for randomized voting}
$R = K {n'}$
  \tcp*{Range of random numbers}
\Indm
\BlankLine
\KwLVar\\
\Indp
$x_{i}$
  \tcp*{The state of the target algorithm}
$C_{i}[{\cal P}_{j}]$
  \tcp*{Cache of $x_{j}$ for each ${\cal P}_{j} \in N^{(1)}_{i}$}
$r_{i}$
  \tcp*{Random number for probabilistic voting}
$g_{i}$
  \tcp*{Whether there is a true guard or not}
$w_{i}$
  \tcp*{Process ID with the largest vote}
$M_{i}$
  \tcp*{Message buffer}
\Indm
\BlankLine
\While{\textbf{true}}{
   \Phase(// Cache refresh \& evaluation of guards){\textnormal{\bf 1}}{
     {\XBCAST} $x_{i}$  \;
     {\XRECV}; $M_{i} :=$ messages received  \;
     Update $C_{i}$ according to $M_{i}$ \;
     $g_{i} := $ (True iff there is a true guard) \;
     \If{$g_{i}$}{
       $r_{i} := $ (select from $\{1,2,...,R\}$, u.a.r.)  \;
     }
   }
   \Phase(// Voting by random numbers){\textnormal{\bf 2}}{
     \If{$g_{i}$}{
       {\XBCAST} $r_{i}$ \;
     }
     {\XRECV};  $M_{i} :=$ messages received \;
     $w_{i} := \XNULL$\;
     \If{$(M_{i} \ne \emptyset)$ 
         and $($the maximum value among received messages is unique$)$}{
       $w_{i} := $ $($the sender process ID of the maximum value$)$ \;
     } 
   }
   \Phase(// The winner executes a command){\textnormal{\bf 3}}{
     \If{$w_{i} \ne \XNULL$}{
       {\XBCAST}\ $w_{i}$  \;
       {\XRECV}; $M_{i} :=$ messages received  \;
       \If{$($received messages from all neighbors$)$ $\land$ 
           $({\cal P}_{i}$ is the winner at each neighbor$)$}{
         Execute a command and update $x_{i}$ and $C_{i}$ \;
       }
     }
   }
   \Phase(// Value propagation to one-hop neighbors){\textnormal{\bf 4}}{
     \If{$($A command is executed in Phase 3$)$}{
       {\XBCAST} $x_{i}, C_{i}$  \;
     }
     {\XRECV}; $M_{i} :=$ messages received  \;
     Update $C_{i}$ and $x_{i}$ according to $M_{i}$  \;
   }
   \Phase(// Value propagation to two-hop neighbors){\textnormal{\bf 5}}{
     \If{$(x_{i}$ is updated in Phase 4$)$}{
       {\XBCAST} $x_{i}$  \;
     }
     {\XRECV}; $M_{i} :=$ messages received  \;
     Update $C_{i}$ according to $M_{i}$  \;
   }
}
\end{algorithm}

\subsection{Proof of correctness}

For each cycle $t \geq 1$ and each ${\cal P}_{i} \in V$,
$r_{i}(t)$ be the random value $r_{i}$ 
at the second phase of cycle $t$.

In a self-stabilizing setting,
processes may start arbitrary point of their algorithm.
That is, in the initial cycle of execution,
processes may start their execution from Phase~2 or subsequent phases. 
The next lemma is based on the assumption on reliable communication.

\begin{lemma}
\label{LEM:TR:CHACHE-COHERENCY}
After each process executes Phase~1,
the cache becomes coherent.
\end{lemma}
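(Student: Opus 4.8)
The plan is to unwind the definition of coherency (Definition~\ref{DEF:TR:CACHE-COHERENCY}) directly from the mechanics of Phase~1 together with the reliable-communication assumption. First I would fix an arbitrary cycle in which every process executes Phase~1, and an arbitrary process ${\cal P}_{i} \in V$ and an arbitrary neighbor ${\cal P}_{j} \in N_{i}$. The goal is to show that at the end of this Phase~1 we have $C_{i}[{\cal P}_{j}] = x_{j}$.

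The key steps, in order, are as follows. (1)~Observe that the only statement in Phase~1 that modifies a local variable other than $g_i$, $r_i$ or the cache is absent: Phase~1 updates $C_i$, $g_i$ and $r_i$ only, so the value $x_{j}$ held by ${\cal P}_{j}$ is unchanged throughout Phase~1 of this cycle. (2)~In Phase~1 every process executes {\XBCAST}~$x_{i}$ \emph{unconditionally}, so in particular ${\cal P}_{j}$ broadcasts the message $x_{j}$. (3)~By the assumption that {\XBCAST} is reliable, this message is received by every direct neighbor of ${\cal P}_{j}$, and hence by ${\cal P}_{i}$, during the {\XRECV} of the same Phase~1; thus $M_{i}$ contains a message from ${\cal P}_{j}$ carrying the value $x_{j}$. (4)~The subsequent step ``Update $C_{i}$ according to $M_{i}$'' then assigns $C_{i}[{\cal P}_{j}] := x_{j}$. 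Combining (1) and (4), at the end of Phase~1 the cache entry $C_{i}[{\cal P}_{j}]$ equals the (still current) value $x_{j}$. Since ${\cal P}_{i}$ and ${\cal P}_{j}$ were arbitrary, the cache is coherent. The argument extends verbatim when each $P_i$ has several local variables, since each is broadcast and cached componentwise.

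I do not expect a real obstacle here: the statement is essentially a bookkeeping consequence of synchrony and reliability. The only point that deserves a sentence of care is the self-stabilizing caveat noted just before the lemma --- that processes may begin execution inside Phase~2 or later --- but this does not affect the claim, because the lemma speaks of the situation \emph{after} a (synchronized) Phase~1 has been executed, and once Phase~1 runs, steps (2)--(4) apply regardless of the arbitrary contents of $C_i$, $g_i$, $r_i$ or $M_i$ on entry. The mild subtlety, if any, is simply to record explicitly that nothing in Phase~1 rewrites any $x_\ell$, so the broadcast values and the cached values refer to the same configuration; this is immediate from inspection of Algorithm~\ref{TR:ALG:TRR1W1}.
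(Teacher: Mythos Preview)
Your proposal is correct and follows essentially the same approach as the paper's own proof, which simply notes that in Phase~1 every process broadcasts its local variable, every neighbor reliably receives it, and the cache update then makes the cache coherent. You add useful extra care (that Phase~1 does not rewrite any $x_\ell$, and the remark about arbitrary starting phases), but the underlying argument is identical.
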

\begin{proof}
In Phase~1, 
each process broadcasts the value of its local variable to neighbors.
Then, each process receives the message and updates its cache.
Because it is assumed that message transmission is reliable, 
the cache becomes coherent after Phase~1.
\end{proof}

Below, 
we observe the execution of processes 
after each process executes Phase~1.
That is, we observe the second or subsequent cycles ($t \geq 2$) 
of the execution.

\begin{lemma}
\label{LEM:TR:EX-EXCLUSION}
No two processes ${\cal P}_{i} \in V$ and ${\cal P}_{j} \in N^{(1,2)}_{i}$
execute a guarded command at the same cycle.
\end{lemma}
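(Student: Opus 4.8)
The plan is to show that the mutual-exclusion property follows from how the three-phase voting cascade (Phases~1--3) interacts with the distance-two structure of the network. First I would fix a cycle $t \geq 2$ and suppose, for contradiction, that two processes ${\cal P}_i$ and ${\cal P}_j$ with ${\cal P}_j \in N^{(1,2)}_i$ both execute a guarded command in Phase~3 of cycle $t$. Since execution in Phase~3 requires ${\cal P}_i$ to have received a winner-candidate message from every neighbor naming ${\cal P}_i$ as the winner candidate, and likewise for ${\cal P}_j$, I would trace this back: each such ``${\cal P}_i$ is the winner candidate at every neighbor'' assertion means every neighbor ${\cal P}_k \in N_i$ set $w_k = i$ in Phase~2, which in turn means $r_i(t)$ was the \emph{unique} maximum among the random values broadcast in Phase~2 by the enabled processes in $N_k$ (including ${\cal P}_i$ itself, since ${\cal P}_i$ executing implies $g_i$ held).

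The key case split is on the distance between ${\cal P}_i$ and ${\cal P}_j$. If the distance is~$1$, i.e.\ ${\cal P}_j \in N_i$, then ${\cal P}_i$ being the winner candidate at ${\cal P}_j$ forces $w_j = i$, while ${\cal P}_j$ being the winner candidate at ${\cal P}_i$ forces $w_i = j$; but $w_j = i$ comes from $r_i(t)$ being the strict maximum among enabled neighbors of ${\cal P}_j$, which includes ${\cal P}_j$, so $r_i(t) > r_j(t)$, and symmetrically $r_j(t) > r_i(t)$ --- contradiction. If the distance is~$2$, pick a common neighbor ${\cal P}_k \in N_i \cap N_j$. Then ${\cal P}_i$ executing in Phase~3 requires $w_k = i$ (since ${\cal P}_k$ is a neighbor of ${\cal P}_i$), and ${\cal P}_j$ executing requires $w_k = j$ (since ${\cal P}_k$ is a neighbor of ${\cal P}_j$). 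Since a single process ${\cal P}_k$ cannot have $w_k$ equal to two distinct IDs simultaneously, and $i \neq j$, this is a contradiction. Either way, the assumption fails.

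One subtlety I would address explicitly: the lemma is stated for ${\cal P}_j \in N^{(1,2)}_i$, which by the definitions in Section~\ref{SEC:PRELIMINARY} means $j \neq i$ and the distance is exactly $1$ or exactly $2$; I would also note the trivial remark that if ${\cal P}_i$ executes in Phase~3 then $w_i \neq \XNULL$ held at the start of Phase~3, so $w_i$ was assigned a genuine sender ID in Phase~2, and that ID is necessarily a process that broadcast in Phase~2, hence was enabled ($g = \text{true}$) --- this is what lets me assert that both random values in play actually exist and were drawn in Phase~2 of the \emph{same} cycle $t$. I expect the main obstacle to be bookkeeping around the self-stabilizing start: a process could begin mid-cycle, so I restrict attention to $t \geq 2$ (as the paragraph preceding the lemma already does), which guarantees that by cycle $t$ every process has run a full Phase~1 and the Phases~1--3 cascade within cycle $t$ is coherent; the distance-$2$ argument then needs only that a common neighbor exists and holds a single-valued $w_k$, which is immediate.
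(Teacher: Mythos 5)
Your proof is correct and follows essentially the same route as the paper's: a case split on whether the distance between ${\cal P}_i$ and ${\cal P}_j$ is one or two, with the distance-two case resolved by the single-valued winner candidate $w_k$ at a common neighbor and the distance-one case by comparing $r_i(t)$ and $r_j(t)$; if anything, your version is more explicit than the paper's. The only caveat is that your step ``$r_i(t)$ is the strict maximum among enabled neighbors of ${\cal P}_j$, \emph{which includes} ${\cal P}_j$'' reads the Phase-2 comparison as including ${\cal P}_j$'s own vote, whereas the pseudocode takes the maximum over \emph{received} messages only (a process does not receive its own broadcast) --- but the paper's own proof relies on the same implicit reading, so your argument is faithful to it.
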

\begin{proof}
By Lemma~\ref{LEM:TR:CHACHE-COHERENCY},
after each process executes Phase~1 once, 
the cache becomes coherent after each process receives messages
sent at the beginning of Phase~1.
Therefore, 
for each ${\cal P}_{i}$,
the value of $g_{i}$ is consistent in the sense that
$g_{i}$ is true 
iff ${\cal P}_{i}$ (the process in the target algorithm) is enabled.
Then, each process generates a random number if it is enabled, 
and processes exchange random numbers.
In case ${\cal P}_{j}$ is a neighbor of ${\cal P}_{i}$, 
${\cal P}_{i}$ and ${\cal P}_{j}$ 
do not execute a guarded command at the same time
because these each random number cannot be the maximum among neighbors.
In case ${\cal P}_{j}$ is a process in two hops from ${\cal P}_{i}$, 
there exists a process ${\cal P}_{k}$ 
such that it is a common neighbor of ${\cal P}_{i}$ and ${\cal P}_{j}$.
When ${\cal P}_{k}$ receives random number 
from ${\cal P}_{i}$ and ${\cal P}_{j}$, 
${\cal P}_{k}$ sends a process ID whose random number is uniquely the largest.
Therefore, it is not possible for
${\cal P}_{i}$ and ${\cal P}_{j}$ to be winners simultaneously. 
\qed
\end{proof}

\begin{lemma}
\label{LEM:TR:STABLE-COHERENCY}
If the cache becomes coherent, 
it remains so thereafter.
\end{lemma}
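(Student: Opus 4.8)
The plan is to prove the invariant by induction over cycles, showing that coherency, once attained, is only \emph{transiently} violated during Phases~3--5 of a cycle and is fully restored by the end of that cycle; together with Lemma~\ref{LEM:TR:CHACHE-COHERENCY}, which makes the cache coherent right after Phase~1, this yields the statement. So assume the cache is coherent at the start of some cycle (equivalently, at the end of the previous one). Phases~1 and~2 modify no $x_j$: in Phase~1 every process merely re-broadcasts its unchanged $x_i$ and every neighbour $\mathcal{P}_j$ re-assigns $C_j[\mathcal{P}_i]:=x_i$, which is a no-op on an already-coherent cache, while Phase~2 only exchanges random numbers. Hence the cache is still coherent on entry to Phase~3.

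The heart of the argument is the command execution in Phase~3. By Lemma~\ref{LEM:TR:EX-EXCLUSION} the processes executing a guarded command in this cycle are pairwise at distance greater than two; since in the R(1)W(1) model such a process writes only its own variables and those of its neighbours, it follows that every variable $x_j$ is rewritten by \emph{at most one} process during the cycle (either $\mathcal{P}_j$ itself, or a unique neighbouring winner $\mathcal{P}_i$, the two cases being mutually exclusive as any two candidate writers would be within distance two). Fix a winner $\mathcal{P}_i$. In Phase~3 it updates $x_i$ and, consistently, the cache entries $C_i[\mathcal{P}_j]$ for each neighbour $\mathcal{P}_j$ whose variable it rewrote. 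Consequently the only cache entries that can have become stale are (a)~$C_k[\mathcal{P}_i]$ for $k\in N_i$, because $x_i$ changed, and (b)~$C_k[\mathcal{P}_j]$ for each rewritten neighbour $\mathcal{P}_j$ and each $k\in N_j\setminus\{\mathcal{P}_i\}$, because $x_j$ changed --- note $C_i[\mathcal{P}_j]$ is already current, having just been set by $\mathcal{P}_i$.

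Next I would check that Phases~4 and~5 repair precisely (a) and (b). In Phase~4 the winner $\mathcal{P}_i$ broadcasts $x_i$ together with $C_i$; each neighbour $\mathcal{P}_j$ receiving this sets $C_j[\mathcal{P}_i]:=x_i$, which repairs (a), and, if $\mathcal{P}_i$ rewrote $x_j$, also sets $x_j:=C_i[\mathcal{P}_j]$, so that $C_i[\mathcal{P}_j]=x_j$ holds again. Since $x_j$ was thereby changed at $\mathcal{P}_j$, $\mathcal{P}_j$ detects this and in Phase~5 broadcasts the new $x_j$; every $k\in N_j$ then sets $C_k[\mathcal{P}_j]:=x_j$, repairing (b) for all of $N_j$, including any $k\in N_i\cap N_j$. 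The winner's own $x_i$ was changed in Phase~3 rather than Phase~4, but it was already propagated in Phase~4, and $x_i$ is not cached beyond one hop, so nothing further is needed. Because no two winners lie within distance two, the repair actions associated with distinct winners operate on disjoint sets of variables and cache entries and cannot interfere. Hence at the end of the cycle $C_k[\mathcal{P}_\ell]=x_\ell$ holds for every $\mathcal{P}_k$ and every $\mathcal{P}_\ell\in N_k$, closing the induction.

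The main obstacle is the bookkeeping in the middle step: pinning down exactly which cache entries go stale after Phase~3 and verifying that the one-hop broadcast of Phase~4 and the two-hop re-broadcast it triggers in Phase~5 together cover all of them, while invoking Lemma~\ref{LEM:TR:EX-EXCLUSION} both to guarantee a unique writer per variable per cycle and to rule out interference among simultaneous winners.
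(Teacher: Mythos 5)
Your proof is correct and follows essentially the same route as the paper's: it uses Lemma~\ref{LEM:TR:EX-EXCLUSION} to ensure that writers in a cycle are pairwise at distance greater than two (hence no conflicting writes to any variable or cache entry), and then observes that the broadcasts of Phases~4 and~5 restore every stale cache entry by the end of the cycle. Your version merely spells out the bookkeeping of which entries go stale and which phase repairs them, which the paper leaves implicit.
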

\begin{proof}
It is sufficient to show that any conflict of updates never occurs, 
that is, 
no two processes modify the same local variable and
the same cache entry concurrently.

If the number of enabled processes is at most one, 
no conflict occurs and the lemma holds clearly.

Suppose that two or more processes are enabled.
By Lemma~\ref{LEM:TR:EX-EXCLUSION}, 
after each process executes Phase~1 once,
the cache becomes coherent, and  
no two processes within two hops execute a guarded command concurrently 
thereafter.
Let ${\cal P}_{i}$ and ${\cal P}_{j}$ be any enabled processes. 
The distance between them is three or more hops.
Therefore,  
${\cal P}_{i}$ and ${\cal P}_{j}$ 
never modify the same local variable at the same time.
Furthermore, it means that
there is no cache entry which need to be updated at the same time.
The execution of the {\XBCAST} primitive in Phases~4 and 5 
results in the coherent state of the cache.
Therefore, 
once the cache coherency condition is satisfied,
it remains so forever.
\qed
\end{proof}

\begin{lemma}
\label{LEM:TR:SIMULATION}
Any execution by \textsf{TrR1W1} 
simulates the execution of the target algorithm 
in the R(1)W(1) under the unfair central daemon.
\end{lemma}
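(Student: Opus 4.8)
The plan is to exhibit, for an arbitrary execution of \textsf{TrR1W1}, a corresponding execution of the target algorithm under the unfair central daemon whose successive configurations agree with the $x$-variables held by the processes ${\cal P}_{i}$. Following the convention fixed just before the lemma, it suffices to consider cycles $t\geq 2$, so that by Lemmas~\ref{LEM:TR:CHACHE-COHERENCY} and~\ref{LEM:TR:STABLE-COHERENCY} the cache is coherent at the start of every such cycle; I would write $\gamma_{t}$ for the target configuration whose value at $P_{i}$ equals the $x_{i}$ held by ${\cal P}_{i}$ at the start of cycle~$t$.

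First I would check that Phase~1 evaluates the guards faithfully: every guard of the target algorithm is a predicate on $x_{i}$ and on the $x_{j}$ with $P_{j}\in N_{i}$, and cache coherency (Definition~\ref{DEF:TR:CACHE-COHERENCY}) gives $C_{i}[{\cal P}_{j}]=x_{j}$, hence $g_{i}$ is true iff $P_{i}$ is enabled in $\gamma_{t}$; moreover the command a winner runs in Phase~3 is exactly the command $P_{i}$ would run in $\gamma_{t}$, the write to $x_{i}$ and the writes to neighbors' variables being realized through the caches and the broadcasts of Phases~4 and~5. Let $W_{t}$ be the set of processes that actually execute a command in cycle~$t$. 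By Lemma~\ref{LEM:TR:EX-EXCLUSION} any two members of $W_{t}$ are at distance at least three, and each is enabled in $\gamma_{t}$.

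Next comes the non-interference argument. In the R(1)W(1) model a move of $P_{i}$ reads and writes only variables of processes in $\{P_{i}\}\cup N_{i}$, and if $\operatorname{dist}(P_{i},P_{j})\geq 3$ these closed neighborhoods are disjoint; hence the moves of the members of $W_{t}$ pairwise commute and each one stays enabled no matter in which order the others are applied. Therefore scheduling the members of $W_{t}$ one after another in any fixed order (say by process index) is a legal sequence of unfair central-daemon steps leading from $\gamma_{t}$ to a well-defined configuration $\gamma_{t}'$, and I would then verify that the propagation of Phases~3--5 brings the transformed system to exactly $\gamma_{t}'$ at the end of the cycle: a winner updates $x_{i}$ and $C_{i}$ in Phase~3 and broadcasts them in Phase~4, each of its neighbors thereby updates its own variable (the R(1)W(1) write to a neighbor) and its cache, and in Phase~5 every process whose variable changed re-broadcasts so that its two-hop neighbors refresh their caches. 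Combined with Lemma~\ref{LEM:TR:STABLE-COHERENCY} this gives $\gamma_{t+1}=\gamma_{t}'$ with coherency preserved, and concatenating the per-cycle step sequences over all $t\geq 2$ produces an execution of the target algorithm under the unfair central daemon that matches the transformed execution.

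The step I expect to be the main obstacle is the last one: making the identification between the three-phase write propagation and the simultaneous R(1)W(1) moves of several winners precise. One must show that the variables and the cache entries touched by distinct winners (and by the distinct Phase~4 and Phase~5 reactions they trigger) are pairwise disjoint; for the variables this is immediate from the distance-$\geq 3$ separation, but for the cache entries one has to argue at the level of (owner, cached-process) pairs --- a process may lie within distance two of two different winners, yet the entries $C_{m}[{\cal P}_{j}]$ it updates still do not collide because a common cached process $P_{j}$ would force those two winners to within distance two of each other. A secondary remark worth a sentence is that a cycle with $W_{t}=\emptyset$ contributes a stuttering (empty) step of the daemon, while the guarantee that such cycles cannot recur forever once some process is enabled is the probabilistic content established later in Theorem~\ref{LEM:TR:PERFOMANCE}.
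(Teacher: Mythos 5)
Your proposal is correct and follows essentially the same route as the paper's proof: cache coherency (Lemmas~\ref{LEM:TR:CHACHE-COHERENCY} and \ref{LEM:TR:STABLE-COHERENCY}) makes the Phase~1 guard evaluation faithful, the distance-$\geq 3$ separation of winners from Lemma~\ref{LEM:TR:EX-EXCLUSION} lets the simultaneous moves be serialized into an unfair-central-daemon schedule, and Phases~4--5 realize the R(1)W(1) writes while restoring coherency. You supply somewhat more detail than the paper (the disjointness of touched cache entries, the stuttering of empty cycles), which only strengthens the argument; the one nit is that the progress guarantee you mention in passing is Lemma~\ref{LEM:TR:PROB-EXEC-GC} rather than Theorem~\ref{LEM:TR:PERFOMANCE}.
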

\begin{proof}
We observe the execution of \textsf{TrR1W1} 
after each process executes Phase~1 once.
Let $t \geq 1$ be the cycle number.

By Lemma~\ref{LEM:TR:EX-EXCLUSION}, 
no two processes within two hops execute a guarded command 
at the same time thereafter.
For each cycle $t \geq 2$, 
let $X(t) = \{ {\cal P}^{(t)}_{1}, {\cal P}^{(t)}_{2}, ...,
               {\cal P}^{(t)}_{|X(t)|} \}$
be the set of processes that execute a command in Phase~3 in cycle $t$.
Because the distance between any two processes in $X(t)$ is three or more, 
parallel execution of all the processes in $X(t)$ in a single step and
a serial execution of processes
${\cal P}^{(t)}_{1}, {\cal P}^{(t)}_{2}, ..., {\cal P}^{(t)}_{|X(t)|}$
in this order
result in the same local variable values and cache values.
Hence 
execution of processes in \textsf{TrR1W1} 
is equivalent to some serial execution, 
which is equivalent to the unfair central daemon.

For each cycle $t \geq 2$, 
by Lemma~\ref{LEM:TR:STABLE-COHERENCY}, 
the transformer maintains cache of local variables within the same cycle
in Phases~4 and 5,   
the composite atomicity of the R(1)W(1) model is simulated. 
\qed
\end{proof}

For each cycle $t$ and each ${\cal P}_{i} \in V$, 
let 
${H}^{(1,2)}_{i}(t) \subseteq N^{(1,2)}_{i}$
be the set of enabled processes in $N^{(1,2)}_{i}$, and
${H}(t) \subseteq V$ 
be the set of all enabled processes, 
i.e., ${H}(t) = \cup_{{\cal P}_{i} \in V} {H}^{(1,2)}_{i}(t)$.

\begin{lemma}
\label{LEM:TR:PROB-EXEC-GC}
For each cycle $t \geq 2$,
if there exists an enabled process,
the probability that at least one process executes a command 
is at least some constant probability $c > 0$.
\end{lemma}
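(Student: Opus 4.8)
The plan is to isolate one favourable event, namely that the largest random value drawn among the enabled processes is attained by a \emph{unique} process, and to show it occurs with probability bounded below by an absolute constant; whenever it occurs, that process is a fortiori the strict maximum among the enabled processes within its own distance-two ball, so by the voting mechanism of Phases~2 and~3 it executes a command. Throughout I fix a cycle $t \geq 2$, so that by Lemmas~\ref{LEM:TR:CHACHE-COHERENCY} and~\ref{LEM:TR:EX-EXCLUSION} the caches are coherent and each $g_i$ correctly indicates whether ${\cal P}_i$ is enabled; consequently the processes that draw a number in Phase~1 are exactly those in $H(t)$, and their values $\{r_i(t) : {\cal P}_i \in H(t)\}$ are independent and uniform on $\{1,\dots,R\}$ with $R = K{n'}$. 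Write $h = |H(t)| \ge 1$.

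The first step is the reduction. If $M = \max\{ r_i(t) : {\cal P}_i \in H(t)\}$ is attained by a single process ${\cal P}_{i^*}$, then $r_{i^*}(t) > r_j(t)$ for every other enabled ${\cal P}_j$, and in particular for every enabled ${\cal P}_j \in N^{(1,2)}_{i^*}$. Tracing the code: in Phase~2 every common neighbour of ${\cal P}_{i^*}$ and any other process sees $r_{i^*}(t)$ as its unique maximum and forwards ${\cal P}_{i^*}$ as its winner candidate, so in Phase~3 the ``winner at every neighbour'' test is passed and ${\cal P}_{i^*}$ executes a command. Hence it suffices to lower bound $\XPROB[\text{the maximum is attained uniquely}]$.

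The second step is the probability estimate. The events ``${\cal P}_j$ uniquely attains the maximum'', for ${\cal P}_j \in H(t)$, are pairwise disjoint and, by exchangeability of the i.i.d.\ draws, equiprobable, so $\XPROB[\text{unique maximum}] = h\cdot p$ where $p = \XPROB[\, r_1(t) > r_j(t) \text{ for all } j \ne 1\,]$. Conditioning on the value of $r_1(t)$ gives $p = R^{-h}\sum_{u=0}^{R-1} u^{\,h-1}$, and bounding the sum below by $\int_0^{R-1} u^{\,h-1}\,du = (R-1)^h/h$ (valid since $u\mapsto u^{h-1}$ is nondecreasing) yields $p \ge \tfrac1h (1-1/R)^h$. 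Since ${n'} \ge n \ge h$ and $K \ge 2$ force $R = K{n'} \ge 2h$, we get $(1-1/R)^h \ge (1-1/(2h))^h \ge \tfrac12$ for every integer $h \ge 1$ (equality at $h=1$; the quantity increases to $\XNAPIERNUM^{-1/2}$). Therefore $\XPROB[\text{unique maximum}] = h\,p \ge \tfrac12$, and the lemma holds with, e.g., $c = \tfrac12$; the same argument gives a positive constant for any range $R = \Theta(n)$.

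The step I expect to be the real work is the reduction, not the arithmetic: one must verify against the exact pseudocode of Phases~2 and~3 that a uniquely-maximal enabled process is in fact elected --- that ties or empty message sets at other processes never make a neighbour of ${\cal P}_{i^*}$ report a different winner candidate or none at all, and that ${\cal P}_{i^*}$ itself clears every check in Phase~3. The only quantitative point, the inequality $(1-1/(2h))^h \ge \tfrac12$ for integers $h \ge 1$, is elementary.
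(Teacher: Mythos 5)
Your argument is correct and takes a genuinely different route from the paper's. The paper bounds, for each enabled ${\cal P}_{i}$, the probability that $r_{i}(t)$ is distinct from and larger than every $r_{j}(t)$ with ${\cal P}_{j} \in {H}^{(1,2)}_{i}(t)$, obtaining a per-process lower bound $1/(S|{H}(t)|)$ with $S = \XNAPIERNUM^{1/K}$, and then combines these events via $1 - \prod_{{\cal P}_{i}}\bigl(1 - 1/(S|{H}(t)|)\bigr)$ to reach $c \approx 0.455$. That last step treats the per-process winning events as if they were independent, which they are not (they are strongly negatively dependent but no justification is given), so the paper's combination step is a soft spot. Your decomposition sidesteps this entirely: the events ``${\cal P}_{j}$ uniquely attains the \emph{global} maximum over ${H}(t)$'' are pairwise disjoint, so their probabilities add exactly, and a unique global maximizer a fortiori strictly dominates the enabled processes in its own distance-two ball. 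This is both more rigorous and gives a cleaner, better constant ($c = 1/2$ versus $\approx 0.455$); your elementary bounds ($p \geq \tfrac1h(1-1/R)^{h}$ via the integral comparison, and $(1-1/(2h))^{h} \geq \tfrac12$ for integers $h \geq 1$) are all correct, using $R = K{n'} \geq 2n \geq 2h$.

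The one caveat is the reduction you yourself flag as ``the real work.'' Under a literal reading of Phase~2, $w_{i^{*}}$ is computed only from the messages ${\cal P}_{i^{*}}$ \emph{receives}, so if ${\cal P}_{i^{*}}$ has no enabled neighbour ($M_{i^{*}} = \emptyset$), or if the maximum among the values it receives from its enabled neighbours is tied, then $w_{i^{*}} = \XNULL$ and ${\cal P}_{i^{*}}$ skips the Phase-3 branch entirely, hence never executes --- even though it holds the unique global maximum. This is not a defect specific to your proof: the paper's own argument has exactly the same exposure (its per-process event does not exclude ties among the neighbours' values, and its assertion that a lone enabled process ``definitely executes'' likewise fails under the literal pseudocode). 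The issue disappears if the winner determination at ${\cal P}_{i}$ is understood to include $r_{i}$ itself, which is evidently the intended semantics; with that reading your reduction is airtight.
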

\begin{proof}
Processes in ${H}^{(1,2)}_{i}(t)$ compete with ${\cal P}_{i}$ 
to execute their guarded commands.
If the set ${H}(t)$ is empty, 
i.e., there exists no enabled processes, 
emulation of the target algorithm is stabilized, and
each process executes {\XBCAST} only in Phase~1.
If the set size of ${H}(t)$ is 1, 
only one process is enabled and 
the process definitely executes a guarded command.
In the following, 
we assume that the set size of ${H}(t)$ is two or more, and
let ${\cal P}_{i}$ and ${\cal P}_{j}$ be any two processes in ${H}(t)$.

For any two processes ${\cal P}_{i}, {\cal P}_{j} \in {H}(t)$, 
the probability of an event that they generate different random numbers is
$1 - 1/R$.
For any ${\cal P}_{i} \in {H}(t)$, 
the probability of an event that
the random number 
$r_{i}(t)$ is different from 
$r_{j}(t)$ for each ${\cal P}_{j} \in {H}^{(1,2)}_{i}(t)$ is 
\begin{align*}
        \biggl( 1 - \frac{1}{R} \biggl)^{|{H}^{(1,2)}_{i}(t)|} 
& 
\geq    \biggl( 1 - \frac{1}{R} \biggl)^{n-1} 
\geq    \biggl( 1 - \frac{1}{K n} \biggl)^{n-1} 
>       \XNAPIERNUM^{-1/K} .
\end{align*}

And given that this holds,
the probability of an event that 
$r_{i}(t)$ is larger than any $r_{j}(t)$ 
for each ${\cal P}_{j} \in {H}^{(1,2)}_{i}(t)$ is 
$1/(|{H}^{(1,2)}_{i}(t)| + 1) \geq 1/|{H}(t)|$ 
due to the symmetry of processes.
The probability of an event that 
$r_{i}(t)$ is larger than 
any $r_{j}(t)$ for each ${\cal P}_{j} \in {H}^{(1,2)}_{i}(t)$ is at least
\begin{align*}
     \frac{1}{|{H}(t)|} \cdot \XNAPIERNUM^{-1/K} 
 =    \frac{1}{S |{H}(t)|}, 
\end{align*}
where $S = \XNAPIERNUM^{1/K}$.
Because $K \geq 2$,
we have $1 < S \leq \XNAPIERNUM^{1/2} = 1.64872\cdots$.

The probability $I_{n}$ of an event that
there exists at least one process, say ${\cal P}_{i} \in {H}(t)$, 
such that $r_{i}(t)$ is 
larger than any $r_{j}$ for each ${\cal P}_{j} \in {H}^{(1,2)}_{i}(t)$ is 
\begin{align*}
I_{n}
& \geq   1 - \prod_{{\cal P}_{i} \in {H}(t)} \biggl( 1 - \frac{1}{S |H(t)|} \biggr) 
  =      1 - \biggl( 1 - \frac{1}{S |H(t)|} \biggr)^{|H(t)|}  
\\
& \geq     1 - \XNAPIERNUM^{-1/S}  
  \geq     1 - \XNAPIERNUM^{-1/\XNAPIERNUM^{1/2}} 
  \approx  0.45476 .
\end{align*}

Hence, if there exists an enabled process, 
at least one process executes a guarded command
with probability at least $c \approx 0.45476$.
\qed
\end{proof}

Finally, we have the following theorem.

\begin{theorem}
\label{THR:TR:CONVTIME}
Let $A$ be a silent self-stabilizing algorithm 
in the R(1)W(1) model 
that stabilizes in $T_{A}$ moves in the worst case
under the unfair central daemon.
Let $A'$ be the 
transformed algorithm of $A$ by \textsf{TrR1W1}.
Then, $A'$ is a self-stabilizing algorithm 
in the synchronous message passing model
that stabilizes in $O(T_{A})$ expected rounds.
\end{theorem}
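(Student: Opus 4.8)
The plan is to combine the deterministic simulation guarantees already in hand (Lemmas~\ref{LEM:TR:CHACHE-COHERENCY}, \ref{LEM:TR:STABLE-COHERENCY} and~\ref{LEM:TR:SIMULATION}) with the per-cycle progress bound of Lemma~\ref{LEM:TR:PROB-EXEC-GC}, turning the latter into a bound on the expected number of cycles via a stochastic-domination argument. First I would fix the legitimate configurations of $A'$ to be those in which the cache is coherent (Definition~\ref{DEF:TR:CACHE-COHERENCY}) and the projection onto the variables $x_i$ is a legitimate configuration of the target algorithm $A$. For closure, starting from such a configuration the cache stays coherent forever by Lemma~\ref{LEM:TR:STABLE-COHERENCY}; and since $A$ is silent and already stabilized, no process has a true guard, so $g_i$ is false at every ${\cal P}_i$ in every Phase~1, no random number or winner is produced, no command is executed, and hence no $x_i$ ever changes. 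Thus the configuration of $A'$ remains legitimate.

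For convergence together with the running-time bound, I would discard the $O(1)$ initial cycles during which processes may begin mid-cycle; once every process has executed Phase~1 the cache is coherent (Lemma~\ref{LEM:TR:CHACHE-COHERENCY}) and stays so (Lemma~\ref{LEM:TR:STABLE-COHERENCY}), and from then on, by Lemma~\ref{LEM:TR:SIMULATION}, the execution realizes a genuine execution of $A$ under the unfair central daemon, in which the set of processes executing a command in Phase~3 of a cycle corresponds to one or more moves of $A$ taken in some serial order. Call a cycle \emph{productive} if at least one command is executed in its Phase~3. Since $A$ stabilizes within $T_A$ moves in the worst case, at most $T_A$ productive cycles can occur before the simulated configuration of $A$ becomes legitimate, i.e.\ before $A'$ reaches a legitimate configuration. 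On the other hand, Lemma~\ref{LEM:TR:PROB-EXEC-GC} guarantees that, whatever the current configuration, as long as some process is enabled the current cycle is productive with probability at least the constant $c$. I would couple the sequence of cycles with a sequence of Bernoulli($c$) trials so that each productive cycle is a success; then the stabilization time is dominated by the index of the $T_A$-th success, whose expectation is at most $T_A/c$. Adding the $O(1)$ discarded cycles and multiplying by the five rounds per cycle yields stabilization in at most $5 T_A / c + O(1) = O(T_A)$ expected rounds.

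The main obstacle is the last step: one must justify that the constant-probability bound of Lemma~\ref{LEM:TR:PROB-EXEC-GC} may be invoked \emph{conditionally} on the entire past history, so that the coupling with i.i.d.\ Bernoulli trials (equivalently, the negative-binomial/sum-of-geometrics domination) is legitimate. This is sound because the votes $r_i(t)$ are drawn afresh and uniformly in each cycle, independently of everything before it, and the bound in Lemma~\ref{LEM:TR:PROB-EXEC-GC} is uniform over all configurations that still contain an enabled process; I would make this explicit rather than leave it implicit. A minor secondary point to record carefully is the bookkeeping ``productive cycle $\Rightarrow$ at least one simulated move of $A$'', which is what keeps the number of productive cycles before stabilization bounded by $T_A$ and not by a larger quantity.
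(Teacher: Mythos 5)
Your proposal follows essentially the same route as the paper: the same lemmas in the same roles, with cache coherency plus distance-two exclusion yielding a faithful serial simulation of $A$, and Lemma~\ref{LEM:TR:PROB-EXEC-GC} supplying constant per-cycle progress. If anything, your final step is more careful than the paper's: the paper only remarks that after $T_A/c$ cycles the \emph{expected} number of simulated moves is at least $T_A$, whereas your stochastic-domination argument (coupling each cycle, conditionally on the past, with a Bernoulli($c$) trial --- legitimate because the votes are drawn afresh and the bound of Lemma~\ref{LEM:TR:PROB-EXEC-GC} is uniform over configurations with an enabled process) is the right way to conclude that the expected number of cycles until the $T_A$-th productive cycle is at most $T_A/c$, and hence that the expected stabilization time is $O(T_A)$ rounds.
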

\begin{proof}
By Lemmas~\ref{LEM:TR:CHACHE-COHERENCY} and \ref{LEM:TR:STABLE-COHERENCY}, 
at the beginning of Phase~1 in the second cycle $t = 2$, 
the cache is coherent and it remains so thereafter.
By Lemmas~\ref{LEM:TR:EX-EXCLUSION} and \ref{LEM:TR:SIMULATION},
if some process in $A'$ executes a guarded command  
then there exists an equivalent serial execution in $A$
in each cycle $t \geq 2$. 
Hence, for any execution of $A'$, 
there exists an equivalent serial execution in $A$. 
Because $A$ is self-stabilizing under the unfair central daemon, 
any execution of $A'$ converges to
some configuration which corresponds to a legitimate configuration of $A$.

By Lemma~\ref{LEM:TR:PROB-EXEC-GC},
for each cycle $t \geq 2$, 
at least one process executes a guarded command 
with probability at least some constant $c > 0$. 
Let $\tau_{A}$ 
be the worst case convergence time of algorithm $A$.
If the number of moves is $\tau_{A}$,  
the execution of $A'$ converges to
some configuration which corresponds to a legitimate configuration of $A$.
If we execute the transformed algorithm for 
$\tau_{A} / c$ cycles (or, equivalently, $5 \tau_{A} / c$ rounds),
the expected number of moves is at least $\tau_{A}$.  
\qed
\end{proof}

Let us we evaluate the overhead factor of 
our transformation by \textsf{TrR1W1}
in terms of message complexity.

\begin{theorem}
\label{LEM:TR:PERFOMANCE}
Let $A$ be the target algorithm in the R(1)W(1) model, and
$A'$ be the transformed algorithm of $A$.
Let $T_{A}$ be the maximum number of moves for convergence of $A$.
The expected total number of executions of {\XBCAST} of 
the transformed algorithm is $O(n T_{A})$, 
where 
$n$ is the number of processes.
\end{theorem}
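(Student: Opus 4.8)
The plan is to count executions of \XBCAST only up to the point where the simulated target algorithm $A$ stabilizes; after that point $A$ is silent, so no process of $A$ is enabled, no move ever occurs again, and only the mandatory Phase~1 broadcasts persist, which are not part of the transformation overhead. The first, purely structural observation is that a single cycle of \textsf{TrR1W1} costs at most $5n$ executions of \XBCAST, and this bound is deterministic: in each of the five phases a process executes \XBCAST at most once (unconditionally in Phase~1, and guarded by $g_{i}$, by $w_{i}\ne\XNULL$, by ``a command was executed'', or by ``$x_{i}$ was updated'' in Phases~2--5), and there are $n$ processes. Hence it suffices to bound the expected number of cycles until stabilization and multiply by $5n$.

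For that bound the key input is Theorem~\ref{THR:TR:CONVTIME}: $A'$ stabilizes in $O(T_{A})$ expected rounds, equivalently in $O(T_{A})$ expected cycles, since a cycle consists of five rounds. Concretely this follows by combining Lemma~\ref{LEM:TR:SIMULATION} --- from cycle $t=2$ on (after the single cycle needed to make the cache coherent, Lemma~\ref{LEM:TR:CHACHE-COHERENCY}) the execution of $A'$ corresponds to a serial execution of $A$ under the unfair central daemon, which by definition of $T_{A}$ performs at most $T_{A}$ moves in total --- with Lemma~\ref{LEM:TR:PROB-EXEC-GC}, by which each cycle that still contains an enabled process realizes at least one move with probability at least the $n$-independent constant $c>0$. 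A cycle may realize several moves at once (of processes pairwise at distance $\ge 3$), which only reduces the number of ``productive'' cycles; so the number of cycles needed to exhaust the at most $T_{A}$ moves is stochastically dominated by a negative-binomial random variable of expectation $T_{A}/c = O(T_{A})$.

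Combining the two facts, let $\mathcal{C}$ be the (random) number of cycles executed before $A$ stabilizes; then the total number of executions of \XBCAST during convergence is at most $5n\,\mathcal{C}$, so its expectation is at most $5n\cdot O(T_{A}) = O(n\,T_{A})$, as claimed. The one-time $O(n)$ cost of the pre-coherency first cycle is absorbed into this bound.

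The step I expect to require the most care is the coupling between the ``moves of $A$'' and the ``cycles of $A'$'' underlying the middle paragraph: one must be sure that (i) the $T_{A}$-move bound genuinely transfers to whatever serial execution of $A$ is actually realized by \textsf{TrR1W1} --- this is exactly the content of Lemma~\ref{LEM:TR:SIMULATION} together with $T_{A}$ being a worst-case bound over all unfair-central-daemon executions --- and (ii) the per-cycle success probability $c$ of Lemma~\ref{LEM:TR:PROB-EXEC-GC} is applied correctly regardless of how the at most $T_{A}$ productive cycles are interleaved with unproductive ones; a clean way to do this is to stop the execution at the first stabilized cycle and bound $\XEXPECT[5n\,\mathcal{C}]$ by $5n\,\XEXPECT[\mathcal{C}]$ via the deterministic per-cycle cost bound, then invoke Theorem~\ref{THR:TR:CONVTIME} for $\XEXPECT[\mathcal{C}]$. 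The remaining bookkeeping --- discarding the first cycle and noting that post-stabilization Phase~1 traffic is not counted as overhead --- is routine.
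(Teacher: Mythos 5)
Your proof is correct and follows essentially the same route as the paper's: bound the per-cycle cost by $5n$ executions of \XBCAST\ (one per process per phase) and the expected number of cycles until stabilization by $T_{A}/c = O(T_{A})$ via Theorem~\ref{THR:TR:CONVTIME} and Lemma~\ref{LEM:TR:PROB-EXEC-GC}, then multiply. Your handling of the expectation (stochastic domination by a negative-binomial count of cycles) is in fact slightly more careful than the paper's one-line version, but it is the same argument.
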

\begin{proof}
By Theorem~\ref{THR:TR:CONVTIME},
if the transform algorithm $A'$ is executed 
for $T_{A} / c$ cycles (or equivalently, for $5 T_{A} / c$ phases),
where $c$ is the lower bound of the probability 
shown in the proof of Lemma~\ref{LEM:TR:PROB-EXEC-GC},
the expected number of processes which executes a guarded command is at least $T_{A}$.
In each phase, 
every process may broadcast a message by {\XBCAST}.
Hence the expected total number of invocations of {\XBCAST} is 
$5 n T_{A} / c = O(n T_{A})$.
\qed
\end{proof}

After the transformed target algorithm converges, 
any message loss does not break the coherency of cache, and
configuration remains legitimate. 
That is assumption of the reliability of communication 
is needed during convergence.
This owes to the assumption that the target algorithm is silent.

\section{Conclusion}
\label{SEC:CONCLUSION}

In this paper, we proposed a new communication model, the R(1)W(1) model, 
which allows each process atomically update local variables 
of neighbor processes. 
We propose some self-stabilizing distributed algorithms
in the R(1)W(1) model.
We also proposed {an example} transformer \textsf{TrR1W1} 
to run such algorithms
in the synchronous message passing model, and
showed that the expected overhead of transformation is 
$O(1)$ in time complexity and
$O(n)$ in message complexity.
The design of a transformer is independent from the R(1)W(1) model, and
development of an efficient transformer is a future task.

We mentioned that the R(1)W(1) model is further generalized to 
the R($d_{r}$)W($d_{w}$) model, 
where $d_{r} \geq 1, d_{w} \geq 0$.
Developing an efficient transformer for the R($d_{r}$)W($d_{w}$) model
is a future work.

%


\end{document}